\documentclass[12pt,a4paper]{article}
\usepackage{amsmath,amssymb,amsthm}
\usepackage{bm}
\newtheorem{theorem}{Theorem}

\newtheorem{remark}{Remark}
\newtheorem{definition}{Definition}
\newtheorem{lemma}{Lemma}

\newtheorem{proposition}{Proposition}

\usepackage{float}
\usepackage{subcaption}
\usepackage{graphicx}

\usepackage{hyperref}

\usepackage[margin=2.5cm]{geometry}

\title{Symmetry Groups of Basins of Attraction in Equivariant Dynamical Systems}

\author{Xiao Xie\\
\footnotesize\textit{School of Physics and Astronomy, Sun Yat-sen University}\\
\footnotesize\textit{Department of Electrical Engineering, City University of Hong Kong}\footnote{One quarter of the work was done at this institution.}\\
\footnotesize\textit{Email: xiex68@mail2.sysu.edu.cn}
}

\date{}

\begin{document}
\date{}
\maketitle

\begin{abstract}
This paper investigates the symmetry properties of basins of attraction and their 
boundaries in equivariant dynamical systems. While the symmetry groups of compact 
attractors are well understood, the corresponding analysis for non-compact basins and 
their boundaries has remained underdeveloped. We establish a rigorous theoretical 
framework demonstrating the hierarchical inclusion of symmetry groups
\[
G_A \subseteq G_{B(A)} \subseteq G_{\partial B(A)},
\]
showing that boundary symmetries can strictly exceed those of attractors and their 
basins. To determine admissible symmetry groups of basin boundaries, we develop 
three complementary approaches: (i) thickening transfer, which connects admissibility 
results from compact attractors to basins; (ii) algebraic constraints, which exploit the 
closedness of boundaries to impose structural restrictions; and (iii) connectivity and 
flow analysis, which incorporates dynamical permutation properties of the system. 
Numerical experiments on the Thomas system confirm these theoretical results, 
illustrating that cyclic group actions permute basins while preserving their common 
boundary, whereas central inversion leaves both basins and boundaries invariant. 
These findings reveal that basin boundaries often exhibit higher symmetry than the 
attractors they separate, providing new insights into the geometry of multistable 
systems and suggesting broader applications to physical and biological models where 
basin structure determines stability and predictability.
\end{abstract}

\tableofcontents

\section{introduction}

Dynamical systems theory provides a powerful framework for understanding the long-term 
behavior of evolving processes across science and engineering, from celestial mechanics 
to neural networks. A fundamental concept in this field is that of an \emph{attractor}---a 
compact invariant set that governs the asymptotic dynamics of a significant portion of 
the phase space. The set of all initial conditions whose trajectories converge to a given 
attractor is known as its \emph{basin of attraction}. The geometric structure of these 
basins, particularly their boundaries, dictates the stability, predictability, and resilience 
of a system.  

In many natural and engineered systems, \emph{multistability} arises when several 
attractors coexist. Each attractor possesses its own basin of attraction, and the intricate 
geometry of the boundaries between these basins can lead to highly sensitive dependence 
on initial conditions. Such sensitivity, often referred to as \emph{final-state sensitivity} 
or \emph{metastability}\cite{li2023symmetric}, is of both theoretical and practical importance. It determines how 
robustly a system can recover after perturbations, and it underpins phenomena ranging 
from neural decision-making to competing states in chemical and climate models.  

Many physical, biological, and chemical systems also exhibit symmetries, which are 
mathematically modeled by the action of a group $\Gamma$ on the phase space. When the 
governing equations are \emph{$\Gamma$-equivariant}---that is, when they commute with 
the group action---the resulting dynamical system inherits a rich structural framework. 
A cornerstone of equivariant dynamics is the \emph{Equivariant Branching Lemma}, which 
predicts the emergence of symmetric patterns and equilibria. While the symmetry 
properties of attractors themselves have been extensively studied 
(e.g., \cite{golubitsky2012singularities, golubitsky2002symmetry}), the corresponding 
symmetry inheritance for \emph{basins of attraction} and their \emph{boundaries} has 
received comparatively less rigorous attention.  

Understanding this inheritance is crucial, as basin boundaries often act as the most 
symmetric structures in a system. They not only constrain global transitions between 
attractors but may also exhibit strictly higher symmetries than the attractors they 
separate. This observation motivates the present work, in which we develop a theoretical 
and numerical framework for characterizing admissible symmetry groups of basins and 
their boundaries in equivariant dynamical systems.

This paper addresses a fundamental gap in the literature: 
\textbf{How do the symmetries of a $\Gamma$-equivariant system manifest in the geometry 
and symmetry of basins of attraction and their boundaries?} 
While it is natural to expect that a group element $g \in \Gamma$ maps the basin of an 
attractor $A$ to the basin of its symmetric image $gA$, a general and rigorous formulation 
of this principle has been lacking. Moreover, an intriguing question arises: can the basin 
boundaries---which separate multiple symmetric attractors---exhibit \emph{higher symmetry} 
than the attractors themselves? Both theoretical considerations and numerical evidence 
suggest that this is indeed possible, motivating the present study.

The main contributions of this paper are threefold:
\begin{enumerate}
    \item We establish a rigorous theoretical framework for analyzing the symmetries of 
    basins of attraction and their boundaries in $G$-equivariant dynamical systems, 
    extending classical symmetry results for compact attractors to typically non-compact 
    basins.
    \item We develop three complementary methods for determining admissible symmetry 
    groups of basin boundaries: (i) \emph{thickening transfer}, which transfers restrictions 
    from compact attractors to basins; (ii) \emph{algebraic constraints}, which exploit the 
    closedness of boundaries; and (iii) \emph{connectivity and flow analysis}, which 
    incorporates dynamical permutation structures.
    \item We validate our theoretical predictions through numerical experiments on the 
    Thomas system, demonstrating the hierarchical inclusion
    \[
    G_A \subseteq G_{B(A)} \subseteq G_{\partial B(A)},
    \]
    and illustrating how cyclic and central symmetries act differently on attractors, 
    basins, and their boundaries.
\end{enumerate}

The remainder of the paper is organized as follows. Section~2 reviews the notions of 
attractors, basins of attraction, and group actions, and recalls algebraic restrictions 
for attractor symmetries. Section~3 establishes the key inclusion linking attractor, 
basin, and boundary symmetries. Section~4 develops theoretical results on basin 
equivariance and boundary symmetry, while Section~5 presents the three methodological 
approaches for determining $G_{\partial B(A)}$. Section~6 reports numerical experiments 
on the Thomas system, confirming the theoretical framework. Section~7 concludes with a 
discussion of implications and directions for future research.

\section{Preliminaries and Motivation}

In this section, we recall the basic notions of attractors, basins of attraction, and group actions that will be used throughout the paper. Our focus is on dynamical systems admitting a symmetry group $G$, and we adopt standard definitions commonly used in the literature. This ensures that the notation introduced here will remain consistent in later sections.

Let $G \subset O(n)$ be a finite symmetry group (the discussion extends to more general compact groups when necessary). 
Consider an autonomous vector field
\[
\dot{x} = f(x), \qquad f:\mathbb{R}^n \to \mathbb{R}^n,
\]
which is assumed to be \emph{$G$-equivariant}, meaning
\[
f(gx) = g f(x), \qquad \forall g \in G, \; x \in \mathbb{R}^n,
\]
where the group action is realized by homeomorphisms (or smooth diffeomorphisms). 
This ensures that the dynamics commute with the group action. 
Equivariance immediately implies that the flow $\phi_t$ also satisfies
\[
\phi_t(gx) = g \phi_t(x), \qquad \forall g \in G, \; t \ge 0.
\]

\medskip

\begin{definition}[Attractor]
A set $A \subset \mathbb{R}^n$ is called an attractor for $f$ if
\begin{enumerate}
    \item $A$ is compact;
    \item $f(A) \subset A$ (forward invariance);
    \item $A$ is (Lyapunov) stable: for every open $U \supset A$ there exists an open $V \supset A$ such that $f^m(V) \subset U$ for all $m \geq 0$;
    \item $A = \omega(x)$ for some $x \in \mathbb{R}^n$, where $\omega(x)$ denotes the $\omega$-limit set of $x$.
\end{enumerate}
\end{definition}

\begin{definition}[Basin of Attraction via $\omega$-limit sets]
Let $A \subset X$ be a compact invariant set of the flow $\phi_t$ generated by a vector field $f$. 
The \emph{basin of attraction} of $A$ is defined as
\[
\mathcal B(A) = \Bigl\{\, x \in X : \omega(x) \subset A \,\Bigr\},
\]
where $\omega(x)$ denotes the $\omega$-limit set of the orbit through $x$:
\[
\omega(x) = \bigcap_{t\ge 0} \overline{\{\phi_s(x) : s \ge t\}}.
\]
This definition ensures that $\mathcal B(A)$ is the maximal invariant set that is attracted by $A$ and naturally includes attractors such as periodic orbits, tori, and chaotic sets. 
\end{definition}

\begin{definition}[Setwise Symmetry Group of an Attractor]
Given an attractor $A \subseteq X$, its \emph{setwise symmetry group} is
\[
\Sigma(A) = \{ g \in G : gA = A \}.
\]
\end{definition}

Finally, we recall a basic topological fact about boundaries, which will be frequently used when analyzing basin geometry.

\begin{proposition}[Boundary of a Set is Closed]
\label{prop:boundary_closed}
Let $X$ be a topological space and $S \subset X$. The boundary
\[
\partial S := \overline{S} \setminus \operatorname{int}(S)
\]
is a closed subset of $X$.
\end{proposition}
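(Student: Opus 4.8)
The plan is to rewrite the set difference defining $\partial S$ as an intersection of two closed sets, and then invoke the elementary fact that a finite intersection of closed sets is closed. First I would recall the purely set-theoretic identity that, for any subsets $A, B \subseteq X$, one has $A \setminus B = A \cap (X \setminus B)$. Applying this with $A = \overline{S}$ and $B = \operatorname{int}(S)$ yields
\[
\partial S = \overline{S} \setminus \operatorname{int}(S) = \overline{S} \cap \bigl(X \setminus \operatorname{int}(S)\bigr).
\]

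Next I would check that each of the two sets on the right-hand side is closed in $X$. The closure $\overline{S}$ is closed essentially by definition: it is the intersection of all closed supersets of $S$, hence closed. The interior $\operatorname{int}(S)$ is open by definition (it is the union of all open subsets of $S$), so its complement $X \setminus \operatorname{int}(S)$ is closed. Therefore $\partial S$ is expressed as the intersection of the two closed sets $\overline{S}$ and $X \setminus \operatorname{int}(S)$, and since a finite intersection of closed sets is closed (dual to the fact that a finite union of open sets is open), it follows that $\partial S$ is closed.

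I do not anticipate any genuine obstacle here; the statement is a standard point-set topology fact. The only place to be slightly careful is the first step, namely that the identity $\overline{S} \setminus \operatorname{int}(S) = \overline{S} \cap (X \setminus \operatorname{int}(S))$ holds for arbitrary sets without any additional hypotheses, and that I am using $\operatorname{int}(S) \subseteq S \subseteq \overline{S}$ only implicitly (it is not actually needed for closedness). As an alternative presentation, I could instead use the equivalent description $\partial S = \overline{S} \cap \overline{X \setminus S}$ and note that this is again an intersection of two closed sets; I would likely include this as a brief remark since it makes the symmetry $\partial S = \partial (X \setminus S)$ transparent and will be convenient when analyzing basin boundaries later.
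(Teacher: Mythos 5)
Your proposal is correct and follows essentially the same route as the paper: rewriting $\partial S = \overline{S} \cap (X \setminus \operatorname{int}(S))$ and noting it is an intersection of two closed sets. No issues.
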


\begin{proof}
The closure $\overline{S}$ is closed, and the interior $\operatorname{int}(S)$ is open, so its complement is closed. Hence
\[
\partial S = \overline{S} \cap \big(X \setminus \operatorname{int}(S)\big)
\]
is the intersection of two closed sets, and is therefore closed.
\end{proof}

A fundamental problem in symmetric dynamics is to understand the possible symmetries that an attractor can possess. For a compact attractor \( A \), powerful algebraic restrictions are known. The following theorem, a cornerstone of this theory, severely limits the possible subgroup structures of \( G_{(A)} \).

\begin{theorem}[Algebraic Restrictions on Attractor Symmetries]\label{thm:attractor_symmetry_restrictions}
Let:
\begin{itemize}
    \item \(\Gamma \subseteq \mathbf{O}(n)\) be a finite group,
    \item \(f: \mathbb{R}^n \to \mathbb{R}^n\) be a \(\Gamma\)-equivariant continuous map,
    \item \(A \subset \mathbb{R}^n\) be a compact attractor for \(f\),
    \item \(\Sigma(A) = \{ \gamma \in \Gamma \mid \gamma A = A \}\) be its symmetry group.
\end{itemize}
Let \(R_\Gamma\) denote the set of reflections in \(\Gamma\). For any subgroup \(\Delta \subseteq \Gamma\), define the set
\[
L_\Delta = \bigcup_{\tau \in R_\Gamma \setminus \Delta} \operatorname{Fix}(\tau).
\]
Then there exists a normal subgroup \(\Delta \lhd \Sigma(A)\) such that:
\begin{enumerate}
    \item The quotient group \(\Sigma(A)/\Delta\) is cyclic.
    \item \(\Delta\) fixes pointwise a connected component of \(\mathbb{R}^n \setminus L_{\Delta}\).
\end{enumerate}
\end{theorem}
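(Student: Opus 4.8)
\medskip
\noindent\textbf{Proof strategy.} The plan is to deduce the structure of $\Sigma(A)$ from three ingredients: a \emph{clustering lemma} saying that two $\Gamma$-symmetric copies of $A$ are either equal or disjoint; a \emph{barrier argument} showing that $A$ cannot straddle a reflection hyperplane $\operatorname{Fix}(\tau)$ for a reflection $\tau$ it does not possess; and a combinatorial argument on the connected components of $A$ that simultaneously produces the normal subgroup $\Delta$ and forces the quotient to be cyclic. Beyond continuity of $f$, only the properties listed in the definition of an attractor (compactness, forward invariance, Lyapunov stability, and being an $\omega$-limit set) will be used.

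First I would prove the clustering lemma: for every $\gamma\in\Gamma$, either $\gamma A=A$ or $\gamma A\cap A=\varnothing$. By equivariance $\gamma A$ is again an attractor, with $\gamma A=\omega(\gamma x_{0})$ whenever $A=\omega(x_{0})$. Suppose $a\in A\cap\gamma A$ and let $U\supset A$ be open; Lyapunov stability gives an open $V$ with $A\subseteq V$ and $f^{m}(V)\subseteq U$ for all $m\ge 0$. Since $a\in\omega(\gamma x_{0})$ and $a\in V$, the forward orbit of $\gamma x_{0}$ enters $V$, so its tail lies in $U$ and $\omega(\gamma x_{0})\subseteq\overline{U}$; letting $U$ shrink to $A$ yields $\gamma A=\omega(\gamma x_{0})\subseteq A$, and the symmetric argument (using that $\gamma A$ is Lyapunov stable and $a\in A=\omega(x_{0})$) gives $A\subseteq\gamma A$, hence $\gamma A=A$.

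Next, the barrier step: if $\tau\in R_{\Gamma}$ is a reflection with $\tau\notin\Sigma(A)$ then $\tau A\ne A$, so $\tau A\cap A=\varnothing$ by clustering; a point $a\in A\cap\operatorname{Fix}(\tau)$ would satisfy $\tau a=a\in\tau A\cap A$, which is impossible, so $A\cap\operatorname{Fix}(\tau)=\varnothing$, i.e.\ $A\subseteq\mathbb{R}^{n}\setminus L_{\Sigma(A)}$. Conjugation by an element of $\Sigma(A)$ preserves $R_{\Gamma}\setminus\Sigma(A)$, so $L_{\Sigma(A)}$ is $\Sigma(A)$-invariant and $\Sigma(A)$ permutes the connected components of $\mathbb{R}^{n}\setminus L_{\Sigma(A)}$, each an open convex cone. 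To obtain $\Delta$ I would exploit the component structure of $A$ itself: as an $\omega$-limit set it is internally chain transitive, a recurrence property which forces $f$ to permute its connected components as a single cycle of some length $p$; since $\Sigma(A)$ commutes with $f$, its induced action on these components commutes with that $p$-cycle and hence has image inside the cyclic centraliser $\mathbb{Z}/p$. Taking $\Delta\lhd\Sigma(A)$ to be the kernel of $\Sigma(A)\to\mathbb{Z}/p$ --- the subgroup fixing every component of $A$ setwise --- makes $\Sigma(A)/\Delta$ cyclic, which is conclusion~(1); for a flow $A$ is connected, $p=1$, and $\Delta=\Sigma(A)$. For conclusion~(2), fix one component $A_{0}\subseteq A$: it is connected and disjoint from $L_{\Sigma(A)}$, hence contained in a single component $C_{0}$ of $\mathbb{R}^{n}\setminus L_{\Sigma(A)}$, which $\Delta$ stabilises, and the circumcentre $p_{0}$ of $A_{0}$ is a $\Delta$-fixed point of the convex cone $C_{0}$. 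Enlarging $\Delta$ inside $\Sigma(A)$ by all reflections that fix $\operatorname{Fix}(\Delta)$ pointwise preserves normality and cyclicity of the quotient (any subgroup squeezed between a normal subgroup with cyclic quotient and the ambient group has the same properties) and still stabilises $C_{0}$; afterwards no $\tau\in R_{\Gamma}\setminus\Delta$ satisfies $\operatorname{Fix}(\Delta)\subseteq\operatorname{Fix}(\tau)$, so a generic point of the nonempty relatively open set $\operatorname{Fix}(\Delta)\cap C_{0}$ avoids every such $\operatorname{Fix}(\tau)$ and therefore lies in $\mathbb{R}^{n}\setminus L_{\Delta}$. Thus $\Delta$ fixes pointwise the relatively open set $\operatorname{Fix}(\Delta)\cap C'$ inside the connected component $C'$ of $\mathbb{R}^{n}\setminus L_{\Delta}$ containing that point, which is conclusion~(2).

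The main obstacle is conclusion~(1). The clustering lemma and the barrier step are essentially point-set topology, but the cyclicity of $\Sigma(A)/\Delta$ is where the dynamics genuinely enters: one needs the recurrence structure of $\omega$-limit sets --- internal chain transitivity forcing $f$ to permute the connected components of $A$ as a \emph{single} cycle --- so that the commuting $\Sigma(A)$-action is confined to the cyclic centraliser of that cycle; without this input one only gets that $\Sigma(A)$ stabilises some component of $\mathbb{R}^{n}\setminus L_{\Sigma(A)}$, which is strictly weaker. Two smaller points require care: verifying that the enlargement of $\Delta$ by ``interior'' reflections does not nullify conclusion~(2) --- it cannot, because the circumcentre $p_{0}$ already exhibits a point of $\operatorname{Fix}(\Delta)$ lying off $L_{\Sigma(A)}$ --- and treating attractors with infinitely many connected components, where the centraliser of the shift must be analysed directly but still forces the finite group $\Sigma(A)/\Delta$ to be cyclic (indeed trivial in that case).
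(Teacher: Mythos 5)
The first thing to note is that the paper does not prove this theorem at all: its ``proof'' is a citation (Melbourne--Dellnitz--Golubitsky 1993 for necessity, Ashwin--Melbourne 1994 for sufficiency), so you are reconstructing a substantial literature result rather than an argument in the paper. A good part of your sketch is sound as far as it goes: the clustering lemma (symmetric copies of a Lyapunov-stable $\omega$-limit set are equal or disjoint), the barrier step $A\cap\operatorname{Fix}(\tau)=\varnothing$ for reflections $\tau\notin\Sigma(A)$, and your finishing move for conclusion (2) --- circumcentre of a component as a $\Delta$-fixed point, enlarging $\Delta$ by the reflections of $\Sigma(A)$ that fix $\operatorname{Fix}(\Delta)$ pointwise, then choosing a point of $\operatorname{Fix}(\Delta)\cap C_0$ off the finitely many hyperplanes $\operatorname{Fix}(\tau)$, $\tau\in R_\Gamma\setminus\Delta$ --- are all workable, provided conclusion (2) is read, as in the literature, as \emph{setwise} invariance of a component of $\mathbb{R}^n\setminus L_\Delta$: what you actually produce is a $\Delta$-fixed point in such a component, which yields exactly that, whereas no nontrivial subgroup of $O(n)$ can fix an open component pointwise, so do not phrase the conclusion as pointwise fixation.

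The genuine gap is conclusion (1). Your normal subgroup $\Delta$ and the cyclicity of $\Sigma(A)/\Delta$ are obtained by decomposing $A$ into its connected components and arguing that $f$ permutes them in a single finite cycle; this presupposes that $A$ has \emph{finitely many} components. The theorem is stated for continuous $\Gamma$-equivariant maps, where Lyapunov-stable $\omega$-limit sets are very often totally disconnected (Cantor-like) or have infinitely many components, and that is precisely the hard case. There your $\Delta$ (the kernel of the $\Sigma(A)$-action on components) carries no control, and the centraliser heuristic collapses: the centraliser of a homeomorphism of a Cantor set is in general far from cyclic, so ``commutes with the dynamics'' does not confine $\Sigma(A)/\Delta$ to a cyclic group, and your parenthetical claim that the quotient is ``indeed trivial in that case'' is unsupported --- nor would triviality even be the right target, since the theorem does not assert cyclicity of $\Sigma(A)$ itself but the existence of a suitable $\Delta$, which in the disconnected case generally is not your kernel. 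The published necessity proof avoids decomposing $A$ altogether: it exploits the fact that the reflection arrangement has only finitely many complementary chambers, shows $A$ avoids the relevant hyperplanes, and then analyses how the $\omega$-limit dynamics cycles $A$ through the finitely many chambers it meets, extracting both the correct $\Delta$ and the cyclic quotient from that finite permutation --- essentially the chamber-permutation mechanism that this paper later axiomatises in Route~B (Theorem~\ref{thm:boundary-constraints}). Until you supply an argument of this kind for disconnected attractors (or restrict the statement to attractors with finitely many components), the proposal does not prove the theorem.
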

\begin{proof}
For the proof of necessity, see \cite{melbourne1993structure}; sufficiency is proved in \cite{ashwin1994symmetry}.
\end{proof}

\begin{remark}
Theorem \ref{thm:attractor_symmetry_restrictions} illustrates that the symmetry group \( \Sigma(A) \) of a compact attractor is not arbitrary but must conform to specific algebraic and geometric constraints. For instance, the dihedral group \( \mathbf{D}_3 \) cannot be the symmetry group of an attractor for a \( \mathbf{D}_6 \)-equivariant system acting standardly on \( \mathbb{R}^2 \).
\end{remark}

\subsection*{Motivation: From Attractors to Basins}

While Theorem \ref{thm:attractor_symmetry_restrictions} provides a complete framework for analyzing the symmetries of the attractor \(A\) itself, the \textit{basin of attraction} \(\mathcal{B}(A)\) presents a significantly more challenging object of study. By definition, the basin
\[
\mathcal{B}(A) = \{ x \in \mathbb{R}^n \mid \omega(x) \subseteq A \}
\]
is generally an open, non-compact set. Consequently, the powerful tools developed for compact attractors, which rely heavily on properties like finite covering and compactness, are not directly applicable to \(\mathcal{B}(A)\). To the best of our knowledge, a general theory characterizing the possible symmetries of basins of attraction has not been established.

This work is motivated by the need to bridge this gap. We propose a novel approach to constrain the symmetries of a basin by analyzing the symmetries of its \textit{boundary}. The central observation is the following chain of inclusions:

\[
G_{(A)} \subseteq G_{(\mathcal{B}(A))} \subseteq G_{(\partial \mathcal{B}(A))}.
\]

The first inclusion is trivial: any symmetry preserving the attractor must also preserve its entire basin. The second inclusion holds because any symmetry of an open set must also preserve its topological boundary.

The key insight is that while the basin \(\mathcal{B}(A)\) is open, its boundary \(\partial \mathcal{B}(A)\) is always a \textit{closed} set. This crucial topological property makes \(\partial \mathcal{B}(A)\) a more amenable object for analysis. In particular, if \(\partial \mathcal{B}(A)\) can be shown to be compact or to have a structure where tools like Theorem \ref{thm:attractor_symmetry_restrictions} or other methods from symmetric dynamics can be applied, then one can establish an \textit{upper bound} on the possible symmetry subgroup \(\Sigma(\partial \mathcal{B}(A)) \subseteq \Gamma\).

From the chain, this upper bound immediately constrains the symmetry group of the basin itself:
\[
G_{(\mathcal{B}(A))} \subseteq \Sigma(\partial \mathcal{B}(A)).
\]
This provides a viable pathway to answer the question: What are the possible subgroups of \(\Gamma\) that can be the symmetry group of a basin of attraction? This paper explores this strategy, developing methods to analyze \(\partial \mathcal{B}(A)\) and deriving consequent restrictions on \(G_{(\mathcal{B}(A))}\).

\section{Inclusion and Equality of Symmetry Groups for Attractors and Basins}

\begin{lemma}\label{lem:Sym-chain}
Let $(X,\varphi_t)$ be a $G$-equivariant dynamical system on a topological space $X$, 
where each $g\in G$ acts by a homeomorphism and the flow (or map) is equivariant:
\[
g(\varphi_t(x)) = \varphi_t(gx), \qquad \forall g\in G,\ t\in\mathbb{R},\ x\in X.
\]

Let $A\subset X$ be a compact attractor and denote its basin by
\[
B(A) := \{x\in X : \ \omega(x) \subset A \}.
\]

Then the following chain of setwise stabilizer inclusions holds:
\[
G_A \;\subseteq\; G_{B(A)} \;\subseteq\; G_{\partial B(A)},
\]
where $G_S := \{ g\in G : gS = S \}$ and $\partial B(A)$ is the topological boundary of $B(A)$.
\end{lemma}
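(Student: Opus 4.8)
The plan is to prove the two inclusions separately, in each case reducing the claim to the equivariance hypothesis $g\varphi_t = \varphi_t g$ together with the fact that each $g \in G$ is a homeomorphism of $X$.

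First I would establish $G_A \subseteq G_{B(A)}$. Fix $g \in G_A$, so that $gA = A$. The key computation is that the $\omega$-limit set transforms equivariantly: for any $x \in X$,
\[
\omega(gx) = \bigcap_{t \ge 0} \overline{\{\varphi_s(gx) : s \ge t\}} = \bigcap_{t \ge 0} \overline{g\{\varphi_s(x) : s \ge t\}} = g\,\omega(x),
\]
where the last equality uses that $g$ is a homeomorphism, hence commutes with closure. Now if $x \in B(A)$ then $\omega(x) \subset A$, so $\omega(gx) = g\,\omega(x) \subset gA = A$, giving $gx \in B(A)$; thus $gB(A) \subset B(A)$. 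Applying the same argument to $g^{-1}$ (which also lies in $G_A$ since $g^{-1}A = g^{-1}(gA) = A$) yields the reverse inclusion, so $gB(A) = B(A)$, i.e.\ $g \in G_{B(A)}$.

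Next I would establish $G_{B(A)} \subseteq G_{\partial B(A)}$. This is a purely topological fact: any homeomorphism $g$ of $X$ satisfying $gS = S$ for a subset $S$ also satisfies $g(\partial S) = \partial S$, because homeomorphisms commute with the closure and interior operators, so $g(\partial S) = g(\overline{S} \setminus \operatorname{int} S) = \overline{gS} \setminus \operatorname{int}(gS) = \overline{S} \setminus \operatorname{int} S = \partial S$. Taking $S = B(A)$ gives the claim. Chaining the two inclusions completes the proof.

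The argument is essentially routine; the only point requiring a little care is the equivariance of the $\omega$-limit set, and within that, the justification that $g$ commutes with the closure operation. This is where I expect the main (minor) obstacle to lie: one must note that $g \in G \subset O(n)$ acts as a homeomorphism, so $g\,\overline{C} = \overline{gC}$ for every set $C$, and that $g$ commutes with the nested intersection over $t \ge 0$. If one wanted to work with discrete time or one-sided flows, one would also want to check that $\{\varphi_s(gx) : s \ge t\} = g\{\varphi_s(x) : s \ge t\}$, which is immediate from $\varphi_s(gx) = g\varphi_s(x)$. No compactness of $A$ or $B(A)$ is needed for this lemma; compactness of $A$ only enters later when one wants $B(A)$ to be open and genuinely ``attracting.''
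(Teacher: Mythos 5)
Your proposal is correct and follows essentially the same route as the paper's proof: the first inclusion via equivariance of the $\omega$-limit set (applied to both $g$ and $g^{-1}$ to get equality), and the second via the fact that a homeomorphism preserving $B(A)$ commutes with closure and interior, hence preserves $\partial B(A)$. Your spelled-out derivation of $\omega(gx)=g\,\omega(x)$ from the definition is a slightly more explicit version of a step the paper simply asserts.
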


\begin{proof}
We prove the two inclusions separately.

\medskip
\noindent
\textbf{(i) } $G_A \subseteq G_{B(A)}$  

Let $g\in G_A$, i.e.\ $gA = A$. Take any $x\in B(A)$ so that $\omega(x) \subset A$.  
By equivariance, 
\[
\omega(gx) = g(\omega(x)) \subset g(A) = A,
\]
so $gx \in B(A)$. Thus 
\[
g(B(A)) \subseteq B(A).
\]

Applying the same argument to $g^{-1} \in G_A$ yields
\[
g^{-1}(B(A)) \subseteq B(A) \quad \Rightarrow \quad B(A) \subseteq g(B(A)).
\]

Combining both directions gives
\[
g(B(A)) = B(A), \qquad \text{i.e., } g \in G_{B(A)}.
\]

\medskip
\noindent
\textbf{(ii) } $G_{B(A)} \subseteq G_{\partial B(A)}$  

Let $g\in G_{B(A)}$, so $g(B(A)) = B(A)$. Since $g$ is a homeomorphism, it commutes with 
closure and interior. Therefore,
\[
\begin{aligned}
g(\partial B(A)) 
&= g(\overline{B(A)} \setminus \operatorname{int} B(A)) \\
&= \overline{g(B(A))} \setminus \operatorname{int}(g(B(A))) \\
&= \overline{B(A)} \setminus \operatorname{int} B(A) \\
&= \partial B(A).
\end{aligned}
\]

Hence $g \in G_{\partial B(A)}$.
\end{proof}

\begin{proposition}[Inclusions and Equalities of Stabilizers]
Let $A \subset X$ be an attractor of a $G$-equivariant flow (or map), with basin 
\[
B(A) := \{x \in X : \omega(x) \subset A\},
\]
and boundary $\partial B(A)$. Then
\[
G_A \;\subseteq\; G_{B(A)} \;\subseteq\; G_{\partial B(A)} .
\]
Moreover, the inclusions can be characterized as follows:
\begin{enumerate}
    \item $G_A = G_{B(A)}$ if and only if
    \[
    B(gA) = B(A) \;\;\Rightarrow\;\; gA = A \qquad \text{for all } g \in G.
    \]
    \item $G_A \subsetneq G_{B(A)}$ if and only if there exists $g \in G$ such that
    \[
    gA \neq A, \qquad B(gA) = B(A).
    \]
    \item $G_{B(A)} = G_{\partial B(A)}$ if and only if
    \[
    g(\partial B(A)) = \partial B(A) \;\;\Rightarrow\;\; g(B(A)) = B(A) \qquad \text{for all } g \in G.
    \]
    \item $G_{B(A)} \subsetneq G_{\partial B(A)}$ if and only if there exists $g \in G$ such that
    \[
    g(\partial B(A)) = \partial B(A), \qquad g(B(A)) \neq B(A).
    \]
\end{enumerate}
\end{proposition}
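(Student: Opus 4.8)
The plan is to deduce the chain $G_A \subseteq G_{B(A)} \subseteq G_{\partial B(A)}$ directly from Lemma~\ref{lem:Sym-chain}, and then to treat the four characterizations as logical refinements of these two inclusions. Note that statements (1) and (2) are mutually exclusive and jointly exhaustive once $G_A \subseteq G_{B(A)}$ is known (equality fails exactly when the inclusion is strict), and likewise for (3) and (4); hence it suffices to prove (2) and (4), with (1) and (3) following by negating the existential quantifier.

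The key dynamical input, used for (1) and (2), is the conjugacy identity
\[
g\bigl(B(A)\bigr) = B(gA), \qquad g \in G.
\]
I would establish this by unwinding definitions: $x \in g(B(A))$ iff $g^{-1}x \in B(A)$ iff $\omega(g^{-1}x) \subseteq A$; since equivariance of the flow yields $\omega(g^{-1}x) = g^{-1}\,\omega(x)$, this is equivalent to $\omega(x) \subseteq gA$, i.e.\ $x \in B(gA)$. In particular, for $g \in G$ one has $g \in G_{B(A)}$ if and only if $B(gA) = B(A)$. Because $G_A \subseteq G_{B(A)}$ always holds, the inclusion is strict precisely when some $g \in G$ lies in $G_{B(A)} \setminus G_A$, that is, $B(gA) = B(A)$ while $gA \neq A$; such a $g$ is automatically not in $G_A$, so it is a genuine witness of strictness. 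This is exactly statement (2), and (1) is its negation over all $g \in G$.

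For (3) and (4) no further dynamical content is needed; these merely spell out when the inclusion $G_{B(A)} \subseteq G_{\partial B(A)}$ supplied by Lemma~\ref{lem:Sym-chain} is an equality versus strict. Indeed, $G_{B(A)} \subsetneq G_{\partial B(A)}$ holds if and only if there exists $g \in G$ with $g \in G_{\partial B(A)} \setminus G_{B(A)}$, i.e.\ $g(\partial B(A)) = \partial B(A)$ but $g(B(A)) \neq B(A)$, which is statement (4); statement (3) is its negation.

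I do not anticipate a serious obstacle: the proposition is essentially a bookkeeping reformulation of Lemma~\ref{lem:Sym-chain}. The one point needing care — and the only place where the dynamics, as opposed to pure point-set topology and elementary group theory, enters — is the equivariance of $\omega$-limit sets, $g\,\omega(x) = \omega(gx)$, which underlies the identity $g(B(A)) = B(gA)$ and hence the passage between $B(A)$ and $B(gA)$ in parts (1) and (2). This identity should be stated explicitly, since it is precisely what makes ``$B(gA) = B(A)$'' the correct reformulation of ``$g \in G_{B(A)}$.''
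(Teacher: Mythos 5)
Your proposal is correct and uses exactly the ingredients the paper itself supplies: the stabilizer chain of Lemma~\ref{lem:Sym-chain} and the equivariance identity $g(B(A))=B(gA)$ (proved in Section~4 via $\omega(gx)=g\,\omega(x)$), after which parts (1)--(4) reduce to the bookkeeping observation that each equality/strictness pair is a negation of the other. The paper states the proposition without an explicit proof, and your argument is precisely the intended one, so there is nothing further to flag.
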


\section{Basin Equivariance and Boundary Symmetry in Equivariant Dynamical Systems}
\begin{theorem}
Consider the dynamical system
\[
\dot{x} = f(x), \qquad x \in X \subset \mathbb{R}^n,
\]
which admits a symmetry group $G \subset O(n)$. Namely, for any $g \in G$,
\[
f(gx) \;=\; g f(x).
\]
Let $A \subset X$ be a compact attractor, and denote by $\mathcal B(A)$ its basin of attraction. Then, for all $g \in G$,
\[
g\bigl(\mathcal B(A)\bigr) \;=\; \mathcal B(gA), 
\qquad
g\bigl(\partial \mathcal B(A)\bigr) \;=\; \partial \mathcal B(gA).
\]
\end{theorem}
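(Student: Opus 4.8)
The plan is to prove the two set equalities directly from equivariance of the flow, using the characterization of the basin via $\omega$-limit sets together with the fact that each $g\in G$ acts as a homeomorphism. The core computation is the identity $\omega(gx)=g\,\omega(x)$, which follows from $\phi_t(gx)=g\phi_t(x)$ and the continuity of $g$: since $g$ commutes with closure, $\overline{\{\phi_s(gx):s\ge t\}}=g\,\overline{\{\phi_s(x):s\ge t\}}$, and intersecting over $t\ge 0$ gives $\omega(gx)=g\,\omega(x)$. This is the only genuinely dynamical input; everything else is point-set topology.

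For the first equality, I would argue by mutual inclusion. If $x\in\mathcal B(A)$, then $\omega(x)\subset A$, hence $\omega(gx)=g\,\omega(x)\subset gA$, so $gx\in\mathcal B(gA)$; this gives $g(\mathcal B(A))\subseteq\mathcal B(gA)$. For the reverse inclusion, apply the same statement with $g$ replaced by $g^{-1}$ and $A$ replaced by $gA$: $g^{-1}(\mathcal B(gA))\subseteq\mathcal B(g^{-1}gA)=\mathcal B(A)$, hence $\mathcal B(gA)\subseteq g(\mathcal B(A))$. Combining the two inclusions yields $g(\mathcal B(A))=\mathcal B(gA)$. (Note $gA$ is again a compact attractor by equivariance, so $\mathcal B(gA)$ is well-defined.)

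For the boundary equality, I would invoke that $g$ is a homeomorphism of $X$, so it commutes with the topological closure and interior operators: $\overline{g(S)}=g(\overline S)$ and $\operatorname{int}(g(S))=g(\operatorname{int} S)$ for any $S\subset X$. Then
\[
g(\partial \mathcal B(A))=g\bigl(\overline{\mathcal B(A)}\setminus\operatorname{int}\mathcal B(A)\bigr)=\overline{g(\mathcal B(A))}\setminus\operatorname{int}\bigl(g(\mathcal B(A))\bigr)=\overline{\mathcal B(gA)}\setminus\operatorname{int}\mathcal B(gA)=\partial\mathcal B(gA),
\]
where the penultimate step uses the first equality just established. This is essentially the same manipulation already carried out in part (ii) of Lemma~\ref{lem:Sym-chain}, specialized to the map $g$ rather than assuming $g$ stabilizes the basin.

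I do not anticipate a serious obstacle; the proof is routine once the identity $\omega(gx)=g\,\omega(x)$ is in hand. The one point that warrants care is the interplay between the flow defined on all of $\mathbb{R}^n$ and the phase space $X\subset\mathbb{R}^n$: one should assume $X$ is $G$-invariant and flow-invariant so that all the objects ($gx$, $\phi_t(x)$, $\omega(x)$, $\mathcal B(A)$) stay inside $X$ and the closure/interior operations are taken consistently in the same ambient space; with that standing assumption the argument goes through verbatim. A second minor point is to record explicitly that $gA$ is a compact attractor — compactness and forward invariance are immediate since $g$ is a homeomorphism commuting with $\phi_t$, Lyapunov stability transfers by pushing the neighborhoods $U,V$ forward under $g$, and $A=\omega(x)$ gives $gA=\omega(gx)$ — so that the notation $\mathcal B(gA)$ is legitimate.
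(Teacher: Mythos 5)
Your proposal is correct and follows essentially the same route as the paper's own proof: the identity $\omega(gx)=g\,\omega(x)$ from flow equivariance gives $g(\mathcal B(A))=\mathcal B(gA)$ by mutual inclusion, and the boundary equality follows because $g$ is a homeomorphism commuting with closure and interior. Your extra care (spelling out the reverse inclusion via $g^{-1}$, noting $gA$ is again a compact attractor, and requiring $X$ to be $G$- and flow-invariant) only makes explicit what the paper leaves implicit.
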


\begin{proof}
First, the group action preserves the equivariance of the flow: if 
$\phi_t(x)$ denotes the flow of the system, then
\[
g\bigl(\phi_t(x)\bigr) \;=\; \phi_t(gx).
\]
Hence, for any $x \in \mathcal B(A)$ we have $\omega(x) \subset A$. For any $g \in G$,
\[
\omega(gx) \;=\; \omega(g \cdot x) \;=\; g\bigl(\omega(x)\bigr) \;\subset\; g(A).
\]
Thus $gx \in \mathcal B(gA)$, which implies
\[
g\bigl(\mathcal B(A)\bigr) \;\subset\; \mathcal B(gA).
\]
The reverse inclusion follows analogously, and therefore
\[
g\bigl(\mathcal B(A)\bigr) \;=\; \mathcal B(gA).
\]

Next, since $g$ is a homeomorphism, it commutes with topological operations: for any set $S \subset X$,
\[
g\bigl(\operatorname{int} S \bigr) = \operatorname{int}(gS), \qquad
g(\overline{S}) = \overline{gS}, \qquad
g(\partial S) = \partial(gS).
\]
Therefore,
\[
g\bigl(\partial \mathcal B(A)\bigr) 
= \partial \bigl(g(\mathcal B(A))\bigr) 
= \partial \mathcal B(gA).
\]
\end{proof}

\begin{remark}
The above conclusion does not rely on whether $\mathcal B(A)$ is open; it only depends on the equivariance of the flow and the fact that the group action is a homeomorphism.  
In standard settings (compact invariant attractors with an attracting neighborhood), $\mathcal B(A)$ is usually open, but this assumption is not necessary here.  

If one uses Milnor attractors or measure-based weak attractors, then $\mathcal B(A)$ may not be open. In this case, the relation 
$g(\mathcal B(A)) = \mathcal B(gA)$ still holds, but the boundary relation requires verification according to the specific definition.  

This theorem does not require $\mathcal B(A)$ to be bounded; it suffices that $A$ is compact and the group action is a homeomorphism. 
\end{remark}

\begin{remark}
When analyzing the invariance of basins of attraction, it is important to pay attention to the choice of local coordinate systems.  
Let $\Pi \subset X$ be a hyperplane (or an affine submanifold), and introduce a local coordinate system on $\Pi$ by
\[
p \;=\; m + \sum_{k=1}^{n-1} \alpha_k u^{(k)}, 
\qquad (\alpha_1,\dots,\alpha_{n-1}) \in \mathbb{R}^{n-1},
\]
where $m \in \Pi$ is a reference point and $\{u^{(1)},\dots,u^{(n-1)}\} \subset T_m\Pi$ is an orthogonal basis.

Under the group action $g \in G$, in order to ensure that the basin sections correspond exactly, the local coordinate system must be transformed synchronously:
\[
m \;\mapsto\; g \cdot m, 
\qquad u^{(k)} \;\mapsto\; g \cdot u^{(k)}, 
\quad k=1,\dots,n-1,
\]
so that the coordinate representation remains consistent. In this case,
\[
\partial \mathcal B_{\Pi}(A) 
\;\;\mapsto\;\; 
\partial \mathcal B_{g\Pi}(gA),
\]
holds exactly in local coordinates.

If one does not perform this synchronized transformation, but instead compares the sections $\Pi$ and $g\Pi$ in global coordinates, their images typically differ by an affine transformation (such as rotation, reflection, scaling, or shear). In numerical visualization, this discrepancy manifests as a “difference in angle” or “geometric distortion.”

In particular, if $g$ belongs to an isometry group (e.g., $O(n)$ or its extensions including translations), then
\[
\| g \cdot v \| = \| v \|, 
\qquad \langle g \cdot v,\, g \cdot w \rangle = \langle v,\, w \rangle,
\]
so that both orthogonality and scale of the basis are preserved. In this situation, the synchronized coordinate transformation introduces only a rigid motion (rotation, reflection, or translation), and therefore does not distort the images. This makes direct numerical comparison and visualization valid.
\end{remark}

\begin{remark}[Symmetry of Boundaries versus Attractors]
It is important to emphasize that the symmetry properties of attractors and their basins need not coincide. 
An attractor $A$ may have a small symmetry group (possibly trivial), while the boundary $\partial B(A)$ can 
exhibit a larger symmetry group. This phenomenon occurs because the boundary is formed as the interface 
between multiple basins of attraction, and such separating structures often inherit the full symmetry of the 
underlying system. For instance, in a $\mathbb{Z}_3$--equivariant system with three mutually symmetric attractors 
$A_1, A_2, A_3$, each individual attractor $A_i$ and its basin $B(A_i)$ may only exhibit trivial or reduced symmetry, 
but their common boundary $\partial B(A_i)$ is invariant under the entire $\mathbb{Z}_3$ action. 

This shows that results about admissible subgroups for attractors (cf. Golubitsky--Stewart, Theorem~9.3) are not 
merely transferable to basins or boundaries. Instead, boundaries may support strictly larger symmetry groups than 
the attractors they separate, leading to new algebraic constraints and symmetry phenomena that are unique to basins. 
\end{remark}

\begin{theorem}[Orbit Characterization of Basin Equivariance]\label{thm:basin_orbit}
Let $\mathcal B(A)$ denote the basin of attraction of an attractor $A \in \mathcal A$ under a finite group $G$ acting on the phase space.

1. If there exists $A \in \mathcal A$ such that $\mathrm{Stab}_G(A) = G$ (i.e., $A$ is invariant under $G$), then
\[
g \cdot \mathcal B(A) = \mathcal B(A), \quad 
g \cdot \partial \mathcal B(A) = \partial \mathcal B(A), 
\quad \forall g \in G,
\]
that is, the basin and its boundary are strictly invariant under the action of $G$.

2. If the attractors form a nontrivial orbit ${\rm Orb}_G(A) = \{A_1, \dots, A_k\}$, then
\[
g \cdot \mathcal B(A_i) = \mathcal B(A_j), \qquad
g \cdot \partial \mathcal B(A_i) = \partial \mathcal B(A_j),
\]
with $\partial \mathcal B(A_i) = \partial \mathcal B(A_j)$ for all $i,j$.  
In this case, the boundary geometry remains invariant, but the interior regions are permuted along the orbit, so the membership symmetry is broken.
\end{theorem}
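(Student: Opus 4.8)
The plan is to reduce everything to the already-established equivariance relation $g(\mathcal B(A)) = \mathcal B(gA)$ and $g(\partial\mathcal B(A)) = \partial\mathcal B(gA)$ proved in the previous theorem, and then to exploit the fact that all $A_i$ in a single $G$-orbit share the same basin boundary. So the first step is simply to invoke that theorem: for any $g\in G$ and any attractor $A$, we have $g(\mathcal B(A)) = \mathcal B(gA)$ and $g(\partial\mathcal B(A)) = \partial\mathcal B(gA)$, with no openness or boundedness hypotheses needed.

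For part 1, suppose $\mathrm{Stab}_G(A) = G$, i.e.\ $gA = A$ for all $g\in G$. Substituting into the equivariance relations immediately gives $g(\mathcal B(A)) = \mathcal B(gA) = \mathcal B(A)$ and $g(\partial\mathcal B(A)) = \partial\mathcal B(gA) = \partial\mathcal B(A)$ for every $g\in G$, which is exactly the claimed strict invariance. This part is essentially a one-line corollary.

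For part 2, let $\mathrm{Orb}_G(A) = \{A_1,\dots,A_k\}$ with $k>1$. Given $i$ and $g$, write $A_j := gA_i$, which is again an element of the orbit; then $g(\mathcal B(A_i)) = \mathcal B(gA_i) = \mathcal B(A_j)$ and likewise $g(\partial\mathcal B(A_i)) = \partial\mathcal B(A_j)$, giving the permutation statements. The one genuinely new claim is the equality $\partial\mathcal B(A_i) = \partial\mathcal B(A_j)$ for \emph{all} $i,j$ — i.e.\ that distinct attractors in a common orbit share a single common boundary. The natural argument: since the $A_i$ are distinct compact attractors, their basins $\mathcal B(A_i)$ are pairwise disjoint open sets whose union (together with the exceptional set of points not converging to any $A_i$) partitions the phase space; a point $x\in\partial\mathcal B(A_i)$ cannot lie in the open set $\mathcal B(A_i)$ nor, being a limit of points of $\mathcal B(A_i)$, in the open set $\mathcal B(A_j)$, so the boundaries all coincide with the common complement-type separating set. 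I would state this carefully, perhaps under the mild standing assumption (typical in this setting, and already implicitly used elsewhere in the paper) that the $A_i$ are the only attractors under consideration or at least that their basins are disjoint with no other basin ``in between.''

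The main obstacle is precisely that last equality: \emph{a priori} $\partial\mathcal B(A_i)$ and $\partial\mathcal B(A_j)$ need not coincide — one could imagine a third basin wedged between them, so their boundaries only overlap partially. The clean statement $\partial\mathcal B(A_i)=\partial\mathcal B(A_j)$ really requires that $\mathcal B(A_i)$ and $\mathcal B(A_j)$ are ``adjacent'' with nothing separating them, which holds for instance when the union $\bigcup_i \mathcal B(A_i)$ is dense (the generic multistable picture). I would therefore either add this as an explicit hypothesis or phrase the conclusion as $g(\partial\mathcal B(A_i)) = \partial\mathcal B(A_j)$ together with $\bigcup_i\partial\mathcal B(A_i)$ being $G$-invariant, which follows unconditionally from part of the argument above and is what the numerical Thomas-system example actually illustrates.
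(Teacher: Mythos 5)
Your proposal follows essentially the same route as the paper's own proof: both reduce everything to the previously established identity $g(\mathcal B(A))=\mathcal B(gA)$, $g(\partial\mathcal B(A))=\partial\mathcal B(gA)$, obtain part 1 and the permutation claims of part 2 as immediate corollaries, and concede that the set-equality $\partial\mathcal B(A_i)=\partial\mathcal B(A_j)$ is not unconditional but needs an extra adjacency/uniqueness-of-separating-set hypothesis (the paper inserts exactly such a sufficient condition inside its proof, and otherwise settles for the always-true relation $g(\partial\mathcal B(A_i))=\partial\mathcal B(A_j)$). One small caution: density of $\bigcup_i\mathcal B(A_i)$ by itself does not force the boundaries to coincide (e.g.\ attracting fixed points at $-2,0,2$ in $\mathbb R$ with repellers at $\pm 1$ and the $\mathbb Z_2$ action $x\mapsto -x$ give dense basins but $\partial\mathcal B(A_1)=\{-1\}\neq\{1\}=\partial\mathcal B(A_3)$), so your fallback of stating only the unconditional equivariance of the boundaries is the safer formulation and matches what the paper actually proves.
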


\begin{proof}
Let $\phi_t$ denote the flow of the system and let $G$ be a finite group acting on the phase space. 
Recall the equivariance of the flow,
\[
g\circ\phi_t=\phi_t\circ g\qquad\text{for all }g\in G,t\in\mathbb R,
\]
and the basic identity (proved earlier):
\[
g\bigl(\mathcal B(A)\bigr)=\mathcal B(gA),\qquad
g\bigl(\partial\mathcal B(A)\bigr)=\partial\mathcal B(gA),
\]
which holds for every attractor $A$ and every $g\in G$. The identity follows immediately from
equivariance of the flow and the definition $\mathcal B(A)=\{x:\omega(x)\subset A\}$.

\medskip

\noindent\textbf{(1) The invariant-attractor case.}
If $\mathrm{Stab}_G(A)=G$, then $gA=A$ for every $g\in G$. Substituting into the basic identity yields
\[
g\bigl(\mathcal B(A)\bigr)=\mathcal B(A),\qquad
g\bigl(\partial\mathcal B(A)\bigr)=\partial\mathcal B(A),
\]
for all $g\in G$, which is the claimed strict invariance of the basin and its boundary.

\medskip

\noindent\textbf{(2) The nontrivial-orbit case.}
Suppose $\mathrm{Orb}_G(A)=\{A_1,\dots,A_k\}$. For any indices $i,j$ and any group element $g\in G$ with $gA_i=A_j$ (such $g$ exists by transitivity of the orbit), the basic identity gives
\[
g\bigl(\mathcal B(A_i)\bigr)=\mathcal B(gA_i)=\mathcal B(A_j),
\qquad
g\bigl(\partial\mathcal B(A_i)\bigr)=\partial\mathcal B(gA_i)=\partial\mathcal B(A_j).
\]
Hence the group action permutes the basins and permutes their boundaries in the stated manner.

\medskip

\noindent\textbf{On the assertion ``\,$\partial\mathcal B(A_i)=\partial\mathcal B(A_j)$\,''.}
From the computation above we obtain the precise and always-true statement
\[
\partial\mathcal B(A_j)=g\bigl(\partial\mathcal B(A_i)\bigr)
\quad\text{whenever }gA_i=A_j.
\]
Thus the boundaries corresponding to attractors in the same $G$-orbit are \emph{images} of one another under $G$ and are therefore geometrically congruent. 

If one wishes to strengthen this to the set-equality $\partial\mathcal B(A_i)=\partial\mathcal B(A_j)$ for all $i,j$, an additional hypothesis is needed. A convenient sufficient condition is the following: the union
\[
\mathcal U := \bigcup_{r=1}^k \mathcal B(A_r)
\]
is $G$-invariant and the group action on the index set $\{1,\dots,k\}$ is transitive (the latter is automatic here). If furthermore the common separating set between $\mathcal U$ and its complement is unique (for example, when the basins $\mathcal B(A_r)$ partition an open $G$-invariant set whose topological boundary is a single closed set), then every $\partial\mathcal B(A_i)$ equals that common boundary, and hence 
$\partial\mathcal B(A_i)=\partial\mathcal B(A_j)$ for all $i,j$. 

In many standard symmetric examples (e.g. finitely many mutually symmetric attracting fixed points or cycles that partition a $G$-invariant open region), this uniqueness hypothesis is satisfied and one obtains equality of the boundary sets rather than merely $G$-equivariance of them.
\end{proof}


Having established the basic equivariance properties of basins and boundaries, 
we now turn to the classification problem: 
\emph{what subgroups of $G$ are admissible as basin symmetries?} 
In the literature on equivariant dynamics, the admissibility of attractor 
symmetry groups has been studied extensively (Golubitsky–Stewart). 
However, the extension from attractors to basins is not automatic, 
and new difficulties arise because basins are typically open and unbounded, 
while their boundaries are closed but not necessarily compact. 

\section{Three Methods to Determine the Symmetry of the Basin of Attraction}

To address these issues, we compare three complementary approaches. 
The first, which we call \emph{Route A (Thickening Transfer)}, extends results 
for compact attractors to their basins by considering small thickenings, 
thereby yielding negative transfer principles. 
The second, \emph{Route B (Boundary Constraints)}, analyzes basin boundaries 
directly, exploiting their closedness to derive algebraic restrictions on 
admissible symmetries. The third, \emph{Route C (Connectivity and Flow Constraints)}, 
incorporates dynamical information about how the flow permutes connected components 
in the complement of reflection sets, thus imposing further structural limitations. 

Each route highlights different aspects of the problem: Route A stresses compactness 
arguments, Route B emphasizes algebraic restrictions, and Route C captures dynamical 
connectivity. Taken together, these methods provide a more complete picture of basin symmetries. 
We now present each route in detail.

\subsection{Route A: Thickening Transfer Theorem}

In order to connect admissibility results for compact attractors with possible symmetry groups of their basins of attraction, 
we employ a ``thickening'' argument. 
Attractors $A$ are compact sets, whereas basins $B(A)$ are typically open and non-compact.
By enlarging $A$ slightly, we obtain a closed neighborhood $\overline{A_\varepsilon}$ whose symmetry is easier to analyze.

\begin{definition}[Thickening of an attractor]
For $\varepsilon>0$ the $\varepsilon$--thickening of a compact attractor $A\subset\mathbb{R}^n$ is
\[
A_\varepsilon := \{x\in \mathbb{R}^n : \mathrm{dist}(x,A)<\varepsilon\},
\qquad 
\overline{A_\varepsilon} = \text{its closure}.
\]
\end{definition}

\begin{theorem}[Thickening Transfer Theorem]\label{thickening}
Let $G\subset O(n)$ be a finite group acting orthogonally on $\mathbb{R}^n$, and suppose the vector field is $G$--equivariant. 
If $A$ is a compact attractor, then there exists $\varepsilon_0>0$ such that for every $0<\varepsilon<\varepsilon_0$,
\[
G_A \;=\; G_{A_\varepsilon} \;=\; G_{\overline{A_\varepsilon}},
\]
where $G_X:=\{g\in G : gX=X\}$.
\end{theorem}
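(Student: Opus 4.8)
The plan is to reduce the theorem to the single-element statement that, for each $g \in G$, one has $g A_\varepsilon = A_\varepsilon$ if and only if $gA = A$ (and likewise with $\overline{A_\varepsilon}$ in place of $A_\varepsilon$), provided $\varepsilon$ is small. The one structural fact that makes everything work is that $G$ acts by \emph{isometries}: for any $g \in O(n)$ and any set $S$, $\operatorname{dist}(gx, gS) = \operatorname{dist}(x, S)$, whence
\[
g(A_\varepsilon) = (gA)_\varepsilon, \qquad g(\overline{A_\varepsilon}) = \overline{g(A_\varepsilon)} = \overline{(gA)_\varepsilon}.
\]
With this identity the inclusions $G_A \subseteq G_{A_\varepsilon}$ and $G_A \subseteq G_{\overline{A_\varepsilon}}$ are immediate: if $gA = A$ then $g(A_\varepsilon) = (gA)_\varepsilon = A_\varepsilon$, and similarly for the closure. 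This step uses only that $g$ is an isometry and $A$ an arbitrary subset; the attractor hypothesis is irrelevant here and could be dropped.

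The substantive direction is $G_{A_\varepsilon} \subseteq G_A$ and $G_{\overline{A_\varepsilon}} \subseteq G_A$, i.e.\ recovering the stabilizer of $A$ from that of its thickening. First I would note that a fixed $g \in G$ has finite order, so $A \subseteq gA$ would force, after iterating $g^{-1}$, a finite descending chain $\cdots \subseteq g^{-2}A \subseteq g^{-1}A \subseteq A$ that closes up to $A = g^{-m}A \subseteq \cdots \subseteq A$ and hence $gA = A$; therefore whenever $gA \neq A$ both $A \setminus gA$ and $gA \setminus A$ are nonempty (alternatively one simply works with the nonempty symmetric difference directly, swapping the roles of $A$ and $gA$ as needed). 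For each such $g$ choose $a_g \in A \setminus gA$; since $gA$ is compact, $\delta_g := \operatorname{dist}(a_g, gA) > 0$.

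Then I would set $\varepsilon_0 := \min\{\delta_g : g \in G,\ gA \neq A\}$, with the convention $\varepsilon_0 := +\infty$ if $G_A = G$; the minimum runs over a finite set, hence $\varepsilon_0 > 0$. Fix $0 < \varepsilon < \varepsilon_0$ and $g \in G$ with $gA \neq A$. The point $a_g$ lies in $A \subseteq A_\varepsilon \subseteq \overline{A_\varepsilon}$, while $\operatorname{dist}(a_g, gA) = \delta_g \geq \varepsilon_0 > \varepsilon$; since $\overline{(gA)_\varepsilon} \subseteq \{x : \operatorname{dist}(x, gA) \leq \varepsilon\}$, this gives $a_g \notin (gA)_\varepsilon = g(A_\varepsilon)$ and $a_g \notin \overline{(gA)_\varepsilon} = g(\overline{A_\varepsilon})$. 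Hence $g$ stabilizes neither $A_\varepsilon$ nor $\overline{A_\varepsilon}$, so $g \notin G_{A_\varepsilon} \cup G_{\overline{A_\varepsilon}}$. Combined with the easy inclusions above, this yields $G_A = G_{A_\varepsilon} = G_{\overline{A_\varepsilon}}$ for every $\varepsilon \in (0, \varepsilon_0)$.

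I do not expect a genuine obstacle; the one thing to get right is the \emph{uniformity} of the threshold $\varepsilon_0$ — a single radius must work simultaneously for every $g \notin G_A$ — and this is exactly where finiteness of $G$ enters (for an infinite compact group one would instead need a compactness or lower-semicontinuity argument to ensure $\inf_g \delta_g > 0$, or a uniform separation bound along the orbit $\{gA : g \in G\}$). Minor points worth flagging in the write-up are the isometry identity $g(A_\varepsilon) = (gA)_\varepsilon$ and the containment $\overline{(gA)_\varepsilon} \subseteq \{x : \operatorname{dist}(x, gA) \leq \varepsilon\}$, both routine, the latter being all that is needed to exclude $a_g$ from the closed thickening (equality of these two sets holds here but is not required).
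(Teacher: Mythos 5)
Your proof is correct and takes essentially the same route as the paper's: both arguments rest on the isometry identity $g(A_\varepsilon)=(gA)_\varepsilon$ and on a uniform separation threshold $\varepsilon_0$ obtained by minimizing, over the finitely many $g\notin G_A$, a positive distance certifying $gA\neq A$ (the paper uses the Hausdorff distance $d_H(A,gA)$, you use a witness point $a_g\in A\setminus gA$). Your write-up is merely more explicit on two routine points the paper glosses over --- why $A\setminus gA\neq\varnothing$ when $gA\neq A$ (your finite-order chain argument) and why $a_g\notin\overline{(gA)_\varepsilon}$ --- so there is no substantive difference.
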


\begin{proof}
Equip the space of nonempty compact subsets of $\mathbb{R}^n$ with the Hausdorff metric $d_H$.
For any $g\in G$, $d_H(A,gA)=0$ if and only if $gA=A$. 
Thus if $g\notin G_A$ we have $d_H(A,gA)>0$. 
Set $\varepsilon_g:=\tfrac12 d_H(A,gA)>0$. 
For $0<\varepsilon<\varepsilon_g$ the thickenings $A_\varepsilon$ and $gA_\varepsilon$ cannot coincide, 
so $g\notin G_{\overline{A_\varepsilon}}$. 
Since $G$ is finite, define
\[
\varepsilon_0 := \min_{g\notin G_A} \varepsilon_g > 0.
\]
Then for $0<\varepsilon<\varepsilon_0$ we have $G_{\overline{A_\varepsilon}}=G_A$. 
As $G_A\subseteq G_{A_\varepsilon}\subseteq G_{\overline{A_\varepsilon}}$, all three groups coincide.
\end{proof}

\begin{lemma}[Weak transfer condition]\label{lem:weak-transfer}
Assume $0<\varepsilon<\varepsilon_0$ so that $G_{\overline{A_\varepsilon}}=G_A$ as in Theorem~\ref{thickening}.
If in addition
\[
G_{\partial B(A)} \;\subseteq\; G_{\overline{A_\varepsilon}},
\tag{C}
\]
then every non-admissible subgroup $\Sigma\le G$ (in the Golubitsky--Stewart sense) is excluded from being the symmetry group of the basin boundary.
\end{lemma}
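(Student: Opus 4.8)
The plan is to combine the inclusion chain of Lemma~\ref{lem:Sym-chain} with the thickening identity of Theorem~\ref{thickening} to force all three stabilizers to coincide, and then to read off admissibility of the boundary group from Theorem~\ref{thm:attractor_symmetry_restrictions}.

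First I would assemble the inclusions already available. Lemma~\ref{lem:Sym-chain} supplies
\[
G_A \;\subseteq\; G_{B(A)} \;\subseteq\; G_{\partial B(A)},
\]
and, since $0<\varepsilon<\varepsilon_0$, Theorem~\ref{thickening} gives $G_{\overline{A_\varepsilon}}=G_A$. Substituting the latter into hypothesis (C) yields $G_{\partial B(A)}\subseteq G_A$. Together with the chain above this produces the sandwich
\[
G_A \;\subseteq\; G_{B(A)} \;\subseteq\; G_{\partial B(A)} \;\subseteq\; G_A,
\]
so that $G_A = G_{B(A)} = G_{\partial B(A)}$. This is the only place where (C) enters, and it is where the argument really earns its keep.

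Next I would invoke the attractor classification. Because $A$ is a compact attractor of the $G$-equivariant vector field --- equivalently of any time-$\tau$ map of its flow, which is a continuous $G$-equivariant map with the same attractor and the same setwise stabilizer --- Theorem~\ref{thm:attractor_symmetry_restrictions} applies with $\Sigma(A)=G_A$: there is a normal subgroup $\Delta\lhd G_A$ with $G_A/\Delta$ cyclic and $\Delta$ fixing pointwise a connected component of $\mathbb{R}^n\setminus L_\Delta$. That is exactly the Golubitsky--Stewart notion of $G_A$ being \emph{admissible}. By the equality just derived, $G_{\partial B(A)}=G_A$ inherits this property. Hence no non-admissible $\Sigma\le G$ can equal $G_{\partial B(A)}$, i.e.\ no such $\Sigma$ is realizable as the symmetry group of the basin boundary, which is the claim.

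I expect the only genuine difficulty to lie outside the proof of the lemma itself: hypothesis (C) is strong, and in the interesting multistable regimes it typically fails --- the common boundary of a $G$-orbit of basins may carry the full group $G$ even when $G_A$ is small, as the $\mathbb{Z}_3$ example in the remarks above illustrates. So the lemma is best read as pinpointing the precise extra assumption under which attractor-level obstructions transfer verbatim to basin boundaries; the real work in applications is to verify, or refute, (C) in a given system. A secondary, purely bookkeeping point is the passage from the flow to a time-$\tau$ map in order to cite Theorem~\ref{thm:attractor_symmetry_restrictions}, which affects neither $A$ nor $G_A$ and is routine.
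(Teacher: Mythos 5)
Your proposal is correct, and its decisive step is exactly the paper's: substituting $G_{\overline{A_\varepsilon}}=G_A$ from Theorem~\ref{thickening} into hypothesis (C) to obtain $G_{\partial B(A)}\subseteq G_A$. Where you differ is in the endgame. The paper stops at that inclusion and concludes in one line, in effect reading ``non-admissible'' as ``$\Sigma\not\le G_A$'', so that such a $\Sigma$ cannot even occur as a subgroup of $G_{\partial B(A)}$. You instead combine the inclusion with Lemma~\ref{lem:Sym-chain} to get the sandwich $G_A=G_{B(A)}=G_{\partial B(A)}$, and then invoke the necessity part of Theorem~\ref{thm:attractor_symmetry_restrictions} (after passing to a time-$\tau$ map, which is indeed routine) to certify that $G_A$, hence $G_{\partial B(A)}$, is admissible, so that no non-admissible $\Sigma$ can \emph{equal} the boundary stabilizer. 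This extra work is not wasted: a mere subgroup of an admissible group need not be admissible, so the equality (which requires the chain lemma) is what makes your reading of ``non-admissible in the Golubitsky--Stewart sense'' go through, whereas the paper's shorter proof relies on its identification of non-admissibility with non-containment in $G_A$. Your route also yields the stronger byproduct that (C) forces all three stabilizers in the chain to coincide, and your caveat that (C) typically fails when a nontrivial orbit of basins shares a common boundary is consistent with the paper's own $\mathbb{Z}_3$ remark and does not affect the proof's validity.
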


\begin{proof}
From (C) we have $G_{\partial B(A)}\subseteq G_{\overline{A_\varepsilon}}=G_A$.
Thus any $\Sigma$ that cannot occur as a symmetry group of compact attractors (i.e.\ $\Sigma\not\le G_A$)
cannot occur as a subgroup of $G_{\partial B(A)}$ either.
\end{proof}

\begin{remark}
Route A therefore yields the following principle:
thickenings preserve the symmetry group $G_A$, and under the weak inclusion condition (C),
non-admissible subgroups are also ruled out as candidates for $G_{\partial B(A)}$.
In other words, admissibility restrictions transfer from compact attractors to basin boundaries
once the boundary symmetries are controlled by those of a small thickening.
\end{remark}

\subsection{Route B: Algebraic Constraints on Basin Boundaries}

\paragraph{Notation.} For a subgroup $H\le G$ we write $H_S:=\{h\in H: hS=S\}$ for the
setwise stabilizer of $S\subset\mathbb R^n$. For $T\in G$ we denote by $\mathrm{Fix}(T)$
the fixed--point set of $T$ in $\mathbb R^n$. The flow will be denoted by $\varphi_t$ and
we assume equivariance $g\circ\varphi_t=\varphi_t\circ g$ for all $g\in G$ and $t\in\mathbb R$.

We now give a self-contained statement and proof of the algebraic constraints that arise
from analyzing basin boundaries directly. The principal new difficulty relative to the
compact--attractor case is that $\partial B(A)$ need not be compact; accordingly some
additional geometric/dynamical hypotheses are required.

\begin{lemma}\label{lem:Lsigma-invariant}
With notation as below, the set $L_\Sigma$ is invariant under the action of $\Sigma$,
i.e. $s(L_\Sigma)=L_\Sigma$ for all $s\in\Sigma$.
\end{lemma}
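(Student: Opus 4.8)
The plan is to show that conjugation by an element of $\Sigma$ permutes the index set over which the union defining $L_\Sigma$ is taken, and that the corresponding fixed-point sets transform compatibly. Recall the defining expression $L_\Sigma=\bigcup_{\tau\in R_\Gamma\setminus\Sigma}\operatorname{Fix}(\tau)$, where $R_\Gamma$ denotes the set of reflections of the ambient finite orthogonal group and $\Sigma$ is a subgroup of it. The statement $s(L_\Sigma)=L_\Sigma$ will follow once we know that $\tau\mapsto s\tau s^{-1}$ is a bijection of $R_\Gamma\setminus\Sigma$ onto itself whenever $s\in\Sigma$, together with the elementary image formula for fixed-point sets.

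First I would record the transformation rule: for any $g\in O(n)$ and any $\tau\in O(n)$ one has $g\operatorname{Fix}(\tau)=\operatorname{Fix}(g\tau g^{-1})$, since $\tau x=x$ holds if and only if $(g\tau g^{-1})(gx)=gx$. I would also note that conjugation by an orthogonal map preserves the eigenvalue multiset, so it sends a reflection (an element acting as $-1$ on a line and $+1$ on its orthogonal complement) to another reflection; combined with closure of $\Gamma$ under products and inverses, this shows $\tau\mapsto s\tau s^{-1}$ maps $R_\Gamma$ bijectively to itself for every $s\in\Gamma$, in particular for every $s\in\Sigma$.

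Next I would use that $\Sigma$ is a subgroup to restrict this bijection to the set $R_\Gamma\setminus\Sigma$. For $s\in\Sigma$ we have $\tau\in\Sigma\iff s\tau s^{-1}\in\Sigma$: the forward direction is closure of $\Sigma$, and the reverse follows by conjugating back with $s^{-1}\in\Sigma$. Hence $\tau\in R_\Gamma\setminus\Sigma$ if and only if $s\tau s^{-1}\in R_\Gamma\setminus\Sigma$, so conjugation by $s$ is a bijection of $R_\Gamma\setminus\Sigma$ onto itself. Assembling the pieces gives
\[
s(L_\Sigma)=\bigcup_{\tau\in R_\Gamma\setminus\Sigma}s\operatorname{Fix}(\tau)
=\bigcup_{\tau\in R_\Gamma\setminus\Sigma}\operatorname{Fix}(s\tau s^{-1})
=\bigcup_{\tau'\in R_\Gamma\setminus\Sigma}\operatorname{Fix}(\tau')
=L_\Sigma,
\]
where the reindexing in the third equality is licensed by the bijection just established. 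Since $s\in\Sigma$ was arbitrary, $L_\Sigma$ is $\Sigma$-invariant.

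The proof is essentially bookkeeping, so there is no serious obstacle; the one point deserving explicit care is the geometric input that conjugation preserves the class of reflections and satisfies $g\operatorname{Fix}(\tau)=\operatorname{Fix}(g\tau g^{-1})$. Everything else reduces to the subgroup axioms applied to $\Sigma$ and $\Gamma$.
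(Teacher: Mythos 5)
Your proof is correct and follows essentially the same route as the paper: both rest on the identity $s\,\mathrm{Fix}(\tau)=\mathrm{Fix}(s\tau s^{-1})$, the fact that conjugation by $s\in\Sigma$ preserves the set of reflections, and the equivalence $\tau\in\Sigma\iff s\tau s^{-1}\in\Sigma$. The only cosmetic difference is that you package the argument as a reindexing bijection of $R_\Gamma\setminus\Sigma$ giving equality at once, whereas the paper proves the inclusion $s(L_\Sigma)\subset L_\Sigma$ and then applies the same argument to $s^{-1}$ to obtain the reverse inclusion.
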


\begin{proof}
Recall
\[
L_\Sigma=\bigcup_{T\in\mathcal R\setminus\mathcal R_\Sigma}\mathrm{Fix}(T),
\]
where $\mathcal R$ is the set of reflections in $G$ and $\mathcal R_\Sigma=\mathcal R\cap\Sigma$
are those reflections that lie in $\Sigma$. Fix $s\in\Sigma$ and $T\in\mathcal R\setminus\mathcal R_\Sigma$.
Then $sT s^{-1}$ is again a reflection (conjugation preserves order and orthogonality), and
since $s\in\Sigma$ we have $s\Sigma s^{-1}=\Sigma$. Hence $sT s^{-1}\notin\mathcal R_\Sigma$
(because otherwise $T=s^{-1}(sTs^{-1})s\in\Sigma$ contradicting $T\notin\mathcal R_\Sigma$).
Consequently $s\mathrm{Fix}(T)=\mathrm{Fix}(sTs^{-1})\subset L_\Sigma$. As this holds for
every $T\in\mathcal R\setminus\mathcal R_\Sigma$ and every $s\in\Sigma$, we obtain
$s(L_\Sigma)\subset L_\Sigma$. Replacing $s$ by $s^{-1}\in\Sigma$ yields the reverse inclusion,
hence $s(L_\Sigma)=L_\Sigma$.
\end{proof}

\begin{theorem}[Algebraic constraints on basin boundaries]\label{thm:boundary-constraints}
Let $G\subset O(n)$ be a finite orthogonal group acting on $\mathbb{R}^n$, and let the
vector field be $G$--equivariant. Let $A$ be a compact attractor with basin $B(A)$ and
boundary $\partial B(A)$. Fix a subgroup $\Sigma\le G$. Let $\mathcal R$ denote the set of
reflections in $G$ and $\mathcal R_\Sigma:=\mathcal R\cap\Sigma$. Define
\[
L_\Sigma \;:=\; \bigcup_{T\in\mathcal R\setminus\mathcal R_\Sigma} \mathrm{Fix}(T),
\]
and let $\{C_i\}_{i\in I}$ be the connected components of $\mathbb R^n\setminus L_\Sigma$.

Assume:
\begin{enumerate}
  \item \textbf{(Boundary invariance)} \quad $\Sigma(\partial B(A))=\partial B(A)$ (setwise).
  \item \textbf{(Disjointness)} \quad $A\cap L_\Sigma=\varnothing$.
  \item \textbf{(Flow permutation)} \quad Let $C_{i_0},\dots,C_{i_{m-1}}$ be the (finitely many)
  components among $\{C_i\}$ that intersect $A$. There exists $T>0$ such that the time--$T$ map
  $\varphi_T$ satisfies
  \[
    \varphi_T\big(C_{i_j}\big)= C_{i_{j+1\bmod m}}\qquad (j=0,\dots,m-1),
  \]
  i.e.\ the flow permutes these components cyclically with period $m$.
\end{enumerate}

Then there exists a normal subgroup $N\triangleleft\Sigma$ such that the quotient $\Sigma/N$ is cyclic.
Equivalently, the image of the permutation representation of $\Sigma$ on the components
$\{C_{i_0},\dots,C_{i_{m-1}}\}$ is a cyclic subgroup of $S_m$.
\end{theorem}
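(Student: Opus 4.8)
The plan is to reduce the statement to an elementary fact about centralisers of cyclic permutations in a symmetric group. First I would record the set-up: by Lemma~\ref{lem:Lsigma-invariant} the set $L_\Sigma$ is $\Sigma$-invariant, and since each $\mathrm{Fix}(T)$ with $T\in\mathcal R$ is a closed hyperplane and $L_\Sigma$ is a finite union of these, $\mathbb R^n\setminus L_\Sigma$ is open with finitely many connected components $\{C_i\}_{i\in I}$, which $\Sigma$ permutes. Write $\bar s\in\mathrm{Sym}(I)$ for the permutation induced by $s\in\Sigma$, so $s(C_i)=C_{\bar s(i)}$. By the Disjointness hypothesis $A\cap L_\Sigma=\varnothing$, the compact set $A$ lies in $\mathbb R^n\setminus L_\Sigma$ and therefore meets only the finitely many components $C_{i_0},\dots,C_{i_{m-1}}$ named in the Flow-permutation hypothesis.

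Second, I would check that $\Sigma$ maps the collection $\mathcal C:=\{C_{i_0},\dots,C_{i_{m-1}}\}$ to itself. Since each $s$ is a homeomorphism, $C_i\cap A\neq\varnothing$ iff $s(C_i)\cap s(A)\neq\varnothing$, so $\Sigma$ preserves $\mathcal C$ as soon as every $s\in\Sigma$ satisfies $s(A)=A$, i.e. as soon as $\Sigma$ stabilises $A$ setwise. I would make this explicit here, because the permutation representation of $\Sigma$ on the $m$-element set $\mathcal C$ referred to in the statement is only defined once $\mathcal C$ is $\Sigma$-stable. Granting it, restriction gives a homomorphism $\pi:\Sigma\to\mathrm{Sym}(\mathcal C)\cong S_m$, $s\mapsto\bar s|_{\mathcal C}$, whose kernel $N:=\{s\in\Sigma:\ s(C_{i_j})=C_{i_j}\text{ for all }j\}$ is normal in $\Sigma$.

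Third comes the dynamical input. The Flow-permutation hypothesis says $\varphi_T$ acts on $\mathcal C$ as the $m$-cycle $\rho=(i_0\,i_1\,\cdots\,i_{m-1})$. $G$-equivariance of the flow gives $s\circ\varphi_T=\varphi_T\circ s$ for all $s\in\Sigma$; evaluating both sides on a component $C_{i_j}\in\mathcal C$ and comparing indices yields $\bar s\rho=\rho\bar s$ on $\mathcal C$, i.e. $\pi(s)\in C_{S_m}(\rho)$. Since the centraliser of a full $m$-cycle in $S_m$ is exactly the cyclic group $\langle\rho\rangle$ of order $m$, we get $\pi(\Sigma)\subseteq\langle\rho\rangle$, hence $\Sigma/N\cong\pi(\Sigma)$ is cyclic; the equivalent formulation is precisely the assertion that $\pi(\Sigma)\le S_m$ is cyclic.

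The main obstacle is the second step. The stated hypotheses (boundary invariance of $\Sigma$, disjointness, flow permutation) do not by themselves force $\Sigma$ to fix the collection $\mathcal C$: in the multistable picture motivating the paper, a symmetry stabilising $\partial B(A)$ typically permutes $A$ with its symmetric siblings and therefore carries $\mathcal C$ to the analogous collection for a different attractor. So the clean argument above needs the extra input that $\Sigma$ stabilises $A$ setwise (equivalently $\Sigma\le G_A$), or some substitute guaranteeing $\Sigma$-invariance of $\bigcup_j C_{i_j}$, and I would state this explicitly; everything else — Lemma~\ref{lem:Lsigma-invariant}, compactness, equivariance, and the centraliser computation — is routine.
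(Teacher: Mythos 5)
Your argument reproduces the paper's proof step for step: finiteness of the components meeting $A$ from compactness of $A$ and hypothesis (ii), the induced homomorphism $\rho:\Sigma\to S_m$ with kernel $N\triangleleft\Sigma$, the flow-induced $m$-cycle $\tau$ from hypothesis (iii), equivariance forcing $\rho(\Sigma)\subseteq C_{S_m}(\tau)$, and the elementary computation $C_{S_m}(\tau)=\langle\tau\rangle$. So the route is the same; there is no methodological difference.

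The obstacle you flag in your final paragraph is, however, real, and it is the one place where your write-up is more careful than the paper rather than less. The paper's Step 2 simply asserts that ``$\Sigma$ preserves $A$ setwise,'' justified only by the parenthetical remark that $s(A)$ is an attractor and that hypothesis (i) preserves boundaries. But $\Sigma(\partial B(A))=\partial B(A)$ does not imply $\Sigma A=A$: the paper's own Theorem~\ref{thm:basin_orbit} and its $\mathbb{Z}_3$ example emphasize precisely that a symmetry can fix the common basin boundary while permuting the attractors, and such a symmetry sends the components meeting $A$ to the components meeting $gA$, which need not be the same collection. Hence the permutation representation of $\Sigma$ on $\{C_{i_0},\dots,C_{i_{m-1}}\}$ invoked in both proofs is only well defined under the extra input you identify --- $\Sigma\le G_A$, or at least $\Sigma$-invariance of $\bigcup_j C_{i_j}$ --- and your insistence on stating that hypothesis explicitly is justified; with it added, your proof and the paper's coincide, and without it the theorem as stated has a gap that the paper's proof does not close.
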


\begin{proof}
We give a detailed proof, making all uses of the hypotheses explicit.

\medskip\noindent\emph{Step 1: Finiteness of components meeting $A$.}
Each $\mathrm{Fix}(T)$ is a closed linear subspace (of codimension at least one), hence
$L_\Sigma$ is closed. The complement $\mathbb R^n\setminus L_\Sigma$ is therefore open
and its connected components $\{C_i\}_{i\in I}$ form an open cover of that complement.
Since $A$ is compact and $A\cap L_\Sigma=\varnothing$ by (ii), the compact set $A$ is
covered by the subcollection of those components meeting $A$. Thus only finitely many
components intersect $A$. Label these $C_{i_0},\dots,C_{i_{m-1}}$.

\medskip\noindent\emph{Step 2: $\Sigma$ induces a permutation representation on the finite set.}
By Lemma~\ref{lem:Lsigma-invariant}, $\Sigma$ preserves $L_\Sigma$ setwise, hence
$\Sigma$ permutes the connected components of $\mathbb R^n\setminus L_\Sigma$. Therefore
every $s\in\Sigma$ maps each component $C_{i_j}$ to some component $C_{i_{k}}$. Because
the finite subset $\{C_{i_0},\dots,C_{i_{m-1}}\}$ is characterized as those components
intersecting $A$, and $\Sigma$ preserves $A$ setwise (indeed $s(A)$ is an attractor and by
assumption (i) boundaries are preserved, so these components are permuted among themselves),
we obtain an action of $\Sigma$ on this finite set. Concretely this yields a group homomorphism
\[
\rho:\Sigma\longrightarrow S_m,
\]
where $S_m$ permutes the indices $\{i_0,\dots,i_{m-1}\}$. Let $N:=\ker\rho$. Then $N\triangleleft\Sigma$
and $\rho(\Sigma)\cong\Sigma/N$.

\medskip\noindent\emph{Step 3: The flow induces an $m$--cycle and centralization.}
Hypothesis (iii) asserts that the time--$T$ map $\varphi_T$ permutes the labelled components
by the cyclic shift $j\mapsto j+1\pmod m$. Since $\varphi_T$ is a homeomorphism of $\mathbb R^n$,
it maps each connected component of $\mathbb R^n\setminus L_\Sigma$ bijectively onto a connected
component, so the inclusion in the hypothesis may be strengthened to equality (this is why we
assumed the equality formulation in the statement). Thus the induced permutation on
$\{C_{i_0},\dots,C_{i_{m-1}}\}$ is an $m$--cycle; denote it by $\tau\in S_m$.

Equivariance of the flow ($s\circ\varphi_T=\varphi_T\circ s$ for all $s\in\Sigma$) implies that, on
the level of permutations of the $m$ components, $\rho(s)$ commutes with $\tau$ for every
$s\in\Sigma$:
\[
\rho(s)\circ\tau=\tau\circ\rho(s),\qquad\forall s\in\Sigma.
\]
Hence $\rho(\Sigma)\subseteq C_{S_m}(\tau)$, the centralizer of $\tau$ in $S_m$.

\medskip\noindent\emph{Step 4: Centralizer of an $m$--cycle is cyclic.}
Let $\tau=(0\ 1\ \cdots\ m-1)\in S_m$. If $\sigma\in S_m$ satisfies $\sigma\tau=\tau\sigma$,
then for any $j\in\{0,\dots,m-1\}$,
\[
\sigma(\tau^t(j))=\tau^t(\sigma(j))\quad\text{for all }t\in\mathbb Z.
\]
Taking $j=0$ shows that $\sigma$ maps the $\tau$--orbit of $0$ onto itself by a fixed shift:
there exists $\ell\in\{0,\dots,m-1\}$ such that $\sigma(j)=j+\ell\pmod m$ for all $j$,
i.e.\ $\sigma=\tau^\ell$. Thus $C_{S_m}(\tau)=\langle\tau\rangle\cong\mathbb Z_m$ and
so $\rho(\Sigma)\subset\langle\tau\rangle$ is cyclic.

\medskip\noindent\emph{Conclusion.}
Therefore $\rho(\Sigma)\cong\Sigma/N$ is cyclic, which proves the theorem.
\end{proof}

\begin{remark}
Theorem~\ref{thm:boundary-constraints} furnishes a concrete algebraic obstruction:
under hypotheses (i)--(iii), any subgroup $\Sigma\le G$ that acts setwise on the basin
boundary must have a cyclic image modulo a normal subgroup. Consequently, groups
whose nontrivial quotients are never cyclic are excluded from arising as symmetry groups
of such basin boundaries.

\smallskip\noindent\textbf{On the hypotheses.} Hypothesis (ii) is essential: if $A\cap L_\Sigma\neq\varnothing$,
the component--permutation argument may fail because the attractor may intersect reflection
hyperplanes, preventing a clean separation into a finite set of full components. Hypothesis (iii)
is stated in the equality form to use that $\varphi_T$ is a homeomorphism; if one only has
inclusions $\varphi_T(C_{i_j})\subset C_{i_{j+1}}$ it is useful to note that invertibility of
$\varphi_T$ upgrades these inclusions to equalities on components.

\smallskip\noindent\textbf{Possible relaxations.} One may weaken (iii) to require that $\varphi_T$
acts on the finite labelled set by an arbitrary permutation $\tau\in S_m$ (not necessarily an
$m$--cycle). The same argument then gives $\rho(\Sigma)\subseteq C_{S_m}(\tau)$ and the
algebraic conclusion becomes a description of the centralizer $C_{S_m}(\tau)$ (which can be
computed in each case). The cyclic conclusion in the theorem is the special (but frequent) case
when $\tau$ is an $m$--cycle.
\end{remark}

\subsection{Route C: Connectivity and Flow Constraints}

\begin{theorem}[C.1: Connectivity constraints on basins]\label{thm:C1}
Let $G \subset O(n)$ be a finite symmetry group and suppose the vector field is $G$--equivariant.
Let $A$ be a compact attractor with basin $B(A)$. Fix a subgroup $\Sigma \leq G$, and define
\[
L_\Sigma := \bigcup_{T \in \mathcal R \setminus \mathcal R_\Sigma} \mathrm{Fix}(T),
\]
where $\mathcal R$ is the set of reflections in $G$ and $\mathcal R_\Sigma$ those contained in $\Sigma$.

Assume:
\begin{enumerate}
  \item (\textbf{Dissipativity}) The system admits a compact absorbing set $K \supset A$, so that every forward orbit in $B(A)$ eventually enters and remains in $K$.
  \item (\textbf{Disjointness}) $A \cap L_\Sigma = \varnothing$.
  \item (\textbf{Cyclic flow permutation}) Inside $K$, the set $X := K \setminus L_\Sigma$ has finitely many connected components $C_0,\dots,C_{m-1}$ that intersect $\omega(B(A))$, and there exists $T>0$ such that
  \[
    \varphi_T(C_i) \subset C_{i+1 \pmod m},\qquad i=0,\dots,m-1.
  \]
\end{enumerate}
Then there exists a normal subgroup $N \triangleleft \Sigma$ such that the quotient $\Sigma/N$ is cyclic.
\end{theorem}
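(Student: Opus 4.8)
The plan is to run the argument of Theorem~\ref{thm:boundary-constraints} almost verbatim, with the compact absorbing set $K$ playing the role of the ambient compact region used there and with $\omega(B(A))$ playing the role that the attractor $A$ did in Route~B. I would first record the two invariance facts that make the combinatorics work. Since $L_\Sigma$ is a finite union of fixed-point subspaces of reflections and each such $\mathrm{Fix}(T)$ is flow-invariant (if $Tx=x$ then equivariance gives $\varphi_t(x)=T\varphi_t(x)$, so $\varphi_t(x)\in\mathrm{Fix}(T)$), the set $L_\Sigma$ is invariant under both $\Sigma$ (Lemma~\ref{lem:Lsigma-invariant}) and the flow. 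Taking $K$ to be $G$-invariant—harmless, since replacing $K$ by $\bigcup_{g\in G}gK$ yields a $G$-invariant compact absorbing set, and hypothesis (iii) is read for this choice—the group $\Sigma$ and the time-$T$ map $\varphi_T$ both act on $X=K\setminus L_\Sigma$ and hence permute its connected components. Finally, the attractor axioms give $\omega(B(A))=A$ (from $A=\omega(x)\subseteq\omega(B(A))\subseteq A$ for a suitable $x$), a compact set disjoint from $L_\Sigma$ by hypothesis (ii), so only finitely many components of $X$ meet it, matching the list $C_0,\dots,C_{m-1}$ of hypothesis (iii).

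Next I would build the permutation representation and centralize. For the case relevant to bounding basin symmetry, $\Sigma\le G_{B(A)}$, the stabilizer of the basin preserves $\omega(B(A))=A$, so $\Sigma$ maps the distinguished family $\{C_0,\dots,C_{m-1}\}$ to itself, giving a homomorphism $\rho:\Sigma\to S_m$ with $N:=\ker\rho\triangleleft\Sigma$ and $\Sigma/N\cong\rho(\Sigma)\le S_m$. From hypothesis (iii) I would only use that $\varphi_T(C_i)$ is a nonempty connected subset of $X$ lying in the single component $C_{i+1}$ (indices mod $m$), so the flow realizes the $m$-cycle $\tau=(0\ 1\ \cdots\ m-1)$ on the index set; there is no need to upgrade these inclusions to equalities, which is where Route~B used that its $\varphi_T$ was a self-homeomorphism—something not available here, since $\varphi_T$ need not map $K$ onto $K$. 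The equivariance $s\circ\varphi_T=\varphi_T\circ s$ then gives, for each $s\in\Sigma$ and each $i$, that the nonempty set $\varphi_T\bigl(s(C_i)\bigr)$ is contained in both $C_{\rho(s)(\tau(i))}$ and $C_{\tau(\rho(s)(i))}$; disjointness of distinct components forces $\rho(s)\tau=\tau\rho(s)$. Hence $\rho(\Sigma)\subseteq C_{S_m}(\tau)=\langle\tau\rangle\cong\mathbb{Z}_m$, so $\rho(\Sigma)$, and therefore $\Sigma/N$, is cyclic—which is the conclusion.

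The step I expect to be the main obstacle is establishing that $\Sigma$ permutes the \emph{distinguished} components $\{C_0,\dots,C_{m-1}\}$ among themselves, rather than merely permuting all components of $X$: this requires $\Sigma$ to fix $\omega(B(A))=A$ setwise, which is automatic precisely when $\Sigma\le G_{B(A)}$—the case in which the theorem is applied—but would otherwise have to be added as an explicit hypothesis, as the analogue of the boundary-invariance hypothesis (i) in Theorem~\ref{thm:boundary-constraints}. A related minor point is the choice of a $\Sigma$-invariant absorbing set $K$, so that $\Sigma$ genuinely acts on $X$. Once these are in place, the permutation-representation, flow-cycle, and centralizer steps transfer from the Route~B proof without change, and the computation $C_{S_m}(\tau)=\langle\tau\rangle$ for an $m$-cycle $\tau$ is exactly the elementary fact established there.
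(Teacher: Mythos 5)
Your proposal follows essentially the same route as the paper's proof of Theorem~\ref{thm:C1}: finitely many components of $K\setminus L_\Sigma$ meeting $\omega(B(A))$, the permutation representation $\rho:\Sigma\to S_m$ with $N=\ker\rho$, commutation of $\rho(\Sigma)$ with the $m$-cycle $\tau$ induced by $\varphi_T$, and the computation $C_{S_m}(\tau)=\langle\tau\rangle$, so the argument is correct and not a different method. Your extra precautions---replacing $K$ by a $G$-invariant absorbing set, working only with the inclusions $\varphi_T(C_i)\subset C_{i+1}$ plus disjointness of components instead of upgrading to equalities, and noting that $\Sigma$ must stabilize $\omega(B(A))$ (e.g.\ via $\Sigma\le G_{B(A)}$ or an explicit invariance hypothesis as in Theorem~\ref{thm:boundary-constraints}) in order to act on the distinguished components---are legitimate patches of points the paper's proof passes over silently, rather than a change of approach.
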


\begin{proof}
We give a full proof.

\textbf{Step 1: $\omega(B(A))\subset K$.}
By (i), $K$ is absorbing and compact; hence every forward orbit starting in $B(A)$ eventually enters $K$ and remains there, which implies $\omega(B(A))\subset K$.

\textbf{Step 2: Finiteness of relevant components.}
Each $\mathrm{Fix}(T)$ is a closed linear subspace, so $L_\Sigma$ is closed. The set $X = K\setminus L_\Sigma$ is therefore open in the compact set $K$, and the connected components of $X$ form an open cover of $X$. Since $\omega(B(A))$ is compact and contained in $X$, only finitely many of these connected components can intersect $\omega(B(A))$. By assumption (iii) we label precisely those finitely many components that meet $\omega(B(A))$ as $C_0,\dots,C_{m-1}$.

\textbf{Step 3: Induced permutation representation.}
For any $s\in\Sigma$, because $s$ is a homeomorphism leaving $L_\Sigma$ invariant (indeed $s$ permutes the fixed hyperplanes and $\Sigma$ preserves the set $L_\Sigma$ setwise), it maps connected components of $X$ to connected components of $X$. Moreover, since $s$ commutes with the flow (equivariance), $s$ preserves the property of a component intersecting $\omega(B(A))$. Hence $\Sigma$ acts on the finite set $\{C_0,\dots,C_{m-1}\}$, inducing a group homomorphism
\[
\rho:\Sigma \longrightarrow S_m,
\]
where $S_m$ is the symmetric group on $m$ elements. Let $N:=\ker\rho$. Then $N\triangleleft\Sigma$, and $\rho(\Sigma)\cong \Sigma/N$.

By construction, elements of $N$ fix each $C_i$ setwise.

\textbf{Step 4: The time--$T$ map induces an $m$--cycle.}
Assumption (iii) asserts that the time--$T$ map $\varphi_T$ permutes the labelled components by a cyclic shift:
\[
\varphi_T(C_i)\subset C_{i+1\pmod m}.
\]
Thus the induced permutation of the set $\{C_0,\dots,C_{m-1}\}$ is an $m$--cycle; denote this permutation by $\tau\in S_m$ (explicitly $\tau=(0\,1\,2\ \cdots\ m-1)$ with our labelling).

\textbf{Step 5: Commutation relation between $\rho(\Sigma)$ and $\tau$.}
Because the flow and the group action commute ($s\circ\varphi_T=\varphi_T\circ s$ for all $s\in\Sigma$), the induced permutations also commute. Concretely, for any $s\in\Sigma$ and any index $i$,
\[
s\big(\varphi_T(C_i)\big)=\varphi_T\big(s(C_i)\big).
\]
Passing to permutations of the index set $\{0,\dots,m-1\}$ gives
\[
\rho(s)\circ\tau=\tau\circ\rho(s)\quad\text{in }S_m.
\]
Hence $\rho(\Sigma)\subset C_{S_m}(\tau)$, the centralizer of $\tau$ in $S_m$.

\textbf{Step 6: Description of the centralizer of an $m$--cycle.}
We now show that the centralizer $C_{S_m}(\tau)$ equals the cyclic subgroup $\langle\tau\rangle$ generated by $\tau$.

Let $\tau=(0\,1\,\dots\,m-1)$ be the $m$--cycle. Suppose $\sigma\in S_m$ satisfies $\sigma\tau=\tau\sigma$. Consider the $\tau$--orbit $\{0,1,\dots,m-1\}$ (which is the whole index set). For any $j\in\{0,\dots,m-1\}$, write $\sigma(j)=k_j$. Commutation gives for all $t\in\mathbb Z$,
\[
\sigma(\tau^t(j))=\tau^t(\sigma(j)).
\]
Setting $j=0$ and varying $t$ shows that $\sigma$ maps the $\tau$--orbit of $0$ onto itself by a shift: there exists $\ell\in\{0,\dots,m-1\}$ such that for all $t$,
\[
\sigma(t)=t+\ell\pmod m,
\]
i.e. $\sigma=\tau^\ell$. Therefore every element of the centralizer is a power of $\tau$, so $C_{S_m}(\tau)=\langle\tau\rangle\cong\mathbb Z_m$.

\textbf{Step 7: Conclude cyclic quotient.}
Combining Step 5 and Step 6, $\rho(\Sigma)\subset\langle\tau\rangle$, hence $\rho(\Sigma)$ is cyclic. Therefore $\Sigma/N\cong\rho(\Sigma)$ is cyclic, which completes the proof.
\end{proof}

\begin{proposition}[C.2: Distinction between attractor and basin symmetries]\label{prop:C2}
Under the assumptions of Theorem~\ref{thm:C1}, it is possible that
$\Sigma=\mathrm{Sym}(A)$ but $\Sigma\neq\mathrm{Sym}(B(A))$.
In particular, if the action of $\Sigma$ on the relevant components $\{C_i\}$ is non-cyclic
(i.e.\ the image $\rho(\Sigma)\subset S_m$ is non-cyclic), then $\Sigma$ cannot equal
$\mathrm{Sym}(B(A))$, even though it may equal $\mathrm{Sym}(A)$.
\end{proposition}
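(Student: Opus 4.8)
The plan is to derive Proposition~\ref{prop:C2} as a direct corollary of Theorem~\ref{thm:C1} together with the permutation-representation machinery already assembled in its proof. The key structural observation is that $\mathrm{Sym}(B(A))$ must act on the relevant components $\{C_0,\dots,C_{m-1}\}$ in a way compatible with the cyclic flow permutation $\tau$, whereas $\mathrm{Sym}(A)$ carries no such constraint once we pass to subgroups that move the component structure nontrivially. So the strategy has two halves: first show that if $\Sigma = \mathrm{Sym}(B(A))$ then $\rho(\Sigma)$ is forced to be cyclic (hence a non-cyclic image is a genuine obstruction); second, exhibit the logical possibility that $\Sigma = \mathrm{Sym}(A)$ while $\rho(\Sigma)$ is non-cyclic, which by the first half rules out $\Sigma = \mathrm{Sym}(B(A))$.

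For the first half, I would argue as follows. Suppose $\Sigma = \mathrm{Sym}(B(A))$. Then every $s \in \Sigma$ satisfies $s(B(A)) = B(A)$, and by the basin equivariance identity $g(\mathcal B(A)) = \mathcal B(gA)$ established in Section~4, every such $s$ also fixes $A$ setwise (so $\mathrm{Sym}(B(A)) \subseteq \mathrm{Sym}(A)$ whenever distinct attractors have distinct basins), and in particular $\Sigma$ preserves $\omega(B(A))$ and the reflection set $L_\Sigma$ by Lemma~\ref{lem:Lsigma-invariant}. Thus the permutation representation $\rho: \Sigma \to S_m$ of Step~3 of Theorem~\ref{thm:C1} is defined, and Steps~5--7 of that proof apply verbatim: equivariance forces $\rho(\Sigma) \subseteq C_{S_m}(\tau) = \langle \tau \rangle$, so $\rho(\Sigma)$ is cyclic. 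Taking the contrapositive: if $\rho(\Sigma)$ is non-cyclic, then $\Sigma$ cannot equal $\mathrm{Sym}(B(A))$.

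For the second half — the existence claim — I would not construct an explicit vector field but instead point to the abstract consistency of the hypotheses: the algebraic restrictions on $\mathrm{Sym}(A)$ from Theorem~\ref{thm:attractor_symmetry_restrictions} do \emph{not} coincide with cyclicity of the induced component action, because $\mathrm{Sym}(A)/\Delta$ cyclic for some \emph{normal} $\Delta$ is a strictly weaker condition than $\rho(\mathrm{Sym}(A))$ cyclic. Concretely, one can take $\Sigma$ a group admitting a non-cyclic quotient that is realized faithfully on $\{C_0,\dots,C_{m-1}\}$ — e.g.\ $\Sigma \cong \mathbb{Z}_2 \times \mathbb{Z}_2$ acting on four components with trivial kernel — while arranging (via the sufficiency direction of Theorem~\ref{thm:attractor_symmetry_restrictions}, i.e.\ the Ashwin--Melbourne realization) a $\Gamma$-equivariant system with compact attractor $A$ having exactly this $\Sigma$ as $\mathrm{Sym}(A)$ and satisfying the dissipativity and disjointness hypotheses (i)--(ii) of Theorem~\ref{thm:C1}. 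The $\mathbb{Z}_3$-example already cited in the Remark after Theorem~\ref{thm:basin_orbit} is the prototype at the level of the broken ``membership symmetry''; here one simply needs the component action to be non-cyclic rather than merely nontrivial.

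The main obstacle I anticipate is the existence half: making the claim ``it is possible that $\Sigma = \mathrm{Sym}(A)$ but $\rho(\Sigma)$ is non-cyclic'' fully rigorous requires either an explicit worked example (a specific $\Gamma$, a specific equivariant vector field, verification of the absorbing set and the cyclic flow permutation (iii)) or a careful invocation of the Ashwin--Melbourne sufficiency theorem to produce such a system abstractly — and one must check that conditions (i)--(iii) of Theorem~\ref{thm:C1} can be met \emph{simultaneously} with a prescribed non-cyclic $\rho(\Sigma)$, which is a genuine compatibility question rather than a formality. For the purposes of this proposition, however, it suffices to state the obstruction (the first half, which is a clean corollary) and to remark that the $\mathbb{Z}_3$-type constructions of Section~4 show the phenomenon is non-vacuous; a complete existence proof can be deferred to the numerical Section~6, where the Thomas system furnishes a concrete instance.
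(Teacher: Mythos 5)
Your first half is exactly the paper's own proof: the paper argues by contradiction that if $\Sigma=\mathrm{Sym}(B(A))$ then Theorem~\ref{thm:C1} forces $\Sigma/N\cong\rho(\Sigma)$ to be cyclic, so a non-cyclic component action rules out $\Sigma=\mathrm{Sym}(B(A))$ — the same contrapositive you give. The paper does not prove the existence (``it is possible that'') half any more rigorously than you do; it simply asserts it after the obstruction, so your hedged sketch is not a gap relative to the paper, though note that your deferral to Section~6 does not actually supply an instance, since in the Thomas system $\mathrm{Sym}(A_i)=\mathrm{Sym}(B(A_i))=\mathbb{Z}_2$ and the relevant permutation actions there are cyclic.
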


\begin{proof}
Assume, towards a contradiction, that $\Sigma=\mathrm{Sym}(B(A))$ and that $\rho(\Sigma)$ is non-cyclic.
By Theorem~\ref{thm:C1}, for $\Sigma$ to preserve $B(A)$ we must have a normal subgroup
$N\triangleleft\Sigma$ with $\Sigma/N$ cyclic. But $\Sigma/N\cong \rho(\Sigma)$ by construction,
so $\rho(\Sigma)$ would be cyclic — contradiction. Hence if the permutation action of $\Sigma$
on the components $\{C_i\}$ is non-cyclic, $\Sigma$ cannot be the full symmetry group of $B(A)$.
This shows that a subgroup may serve as $\mathrm{Sym}(A)$ but fail to preserve the basin $B(A)$.
\end{proof}

\begin{remark}
The hypotheses (ii) and (iii) are essential for the above conclusions. If the attractor $A$ intersects
$L_\Sigma$ then the component-permutation argument breaks down, and if the flow does not induce a
single $m$--cycle on the chosen components then the centralizer in $S_m$ can be larger than a cyclic
group (leading to different algebraic constraints). In applications it is therefore helpful to verify these
geometric and dynamical hypotheses (for instance by inspecting invariant subspaces and numerically
computing component permutation patterns).
\end{remark}

\section{Numerical Experiments on Basin Slices}

To verify the symmetry of basins under group actions, we computed the three-dimensional basins of attraction for the Thomas system\cite{thomas1999} with parameter \(b = 0.1665\):
\[
\dot{x}_i = -b x_i + \sin(x_{i+1}), \quad i=1,2,3 \ (\text{mod } 3).
\]  
This system, for the chosen parameter value, exhibits three attractors and three unstable equilibrium points.

\begin{figure}[htbp]
    \centering
    \includegraphics[width=0.8\textwidth]{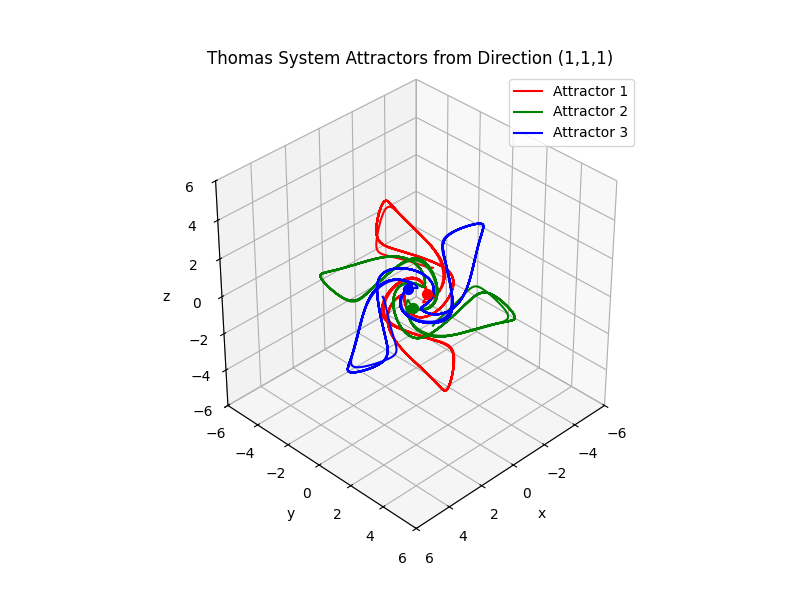}
    \caption{Trajectories of the Thomas system with $b=0.1665$ converging to the three attractors. The trajectories start from initial conditions 
    $[1,2,3]$, $[2,3,1]$, and $[3,1,2]$, respectively. The figure is shown from a viewpoint along the vector $(1,1,1)$, with the origin as the center of the view.}
    \label{fig:attractors_view}
\end{figure}

A uniform grid was constructed in three dimensions:
\[
x,y,z \in [-6,6], \quad \text{with 401 points along each axis}.
\]  
Basins were computed using the \texttt{AttractorsViaRecurrences} method with \texttt{sparse=false}, yielding the attractors stored in \texttt{attractors} and the corresponding basin indices in \texttt{basins}.

\medskip

\noindent
\textbf{Plane slice method.}  
To visualize the three-dimensional basins, slices were taken along affine planes defined by
\[
a(x-o_x)+b(y-o_y)+c(z-o_z)+d=0,
\]
where \((o_x,o_y,o_z)\) denotes the plane origin. Two integer vectors, \texttt{u\_axis} and \texttt{v\_axis}, were used to define the coordinate directions within each plane; these vectors were subsequently orthogonalized and normalized to construct the plane coordinates. Plane sampling resolution was \(N = 300\), resulting in \(300\times 300\) grids for each slice.

\medskip

\noindent
\begin{figure}[htbp]
\centering

\begin{subfigure}[t]{0.48\textwidth}
    \centering
    \includegraphics[width=\textwidth]{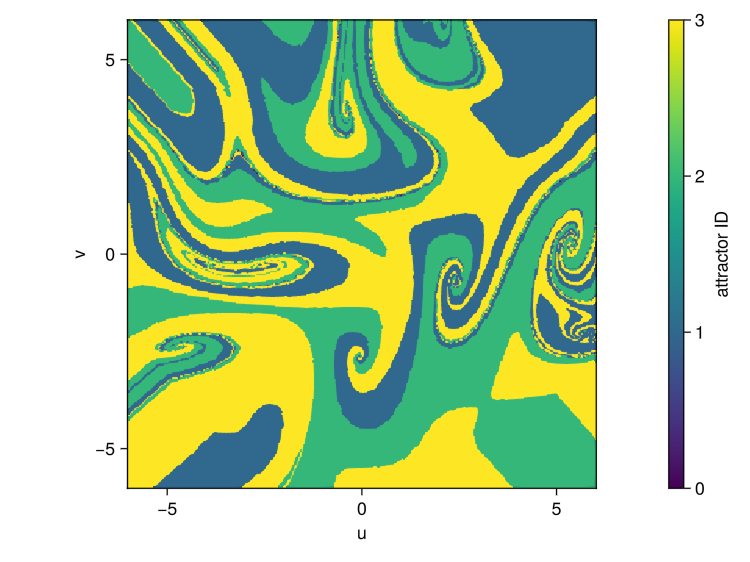}
    \caption{C$_3$ / 1: $7x + 8y + 814z + 1 = 0$, origin $(-1/7,0,0)$, direction vectors $(8,-7,0),(5698,6512,-113)$ (Table 1).}
    \label{fig:basin_C3_1}
\end{subfigure}
\hfill
\begin{subfigure}[t]{0.48\textwidth}
    \centering
    \includegraphics[width=\textwidth]{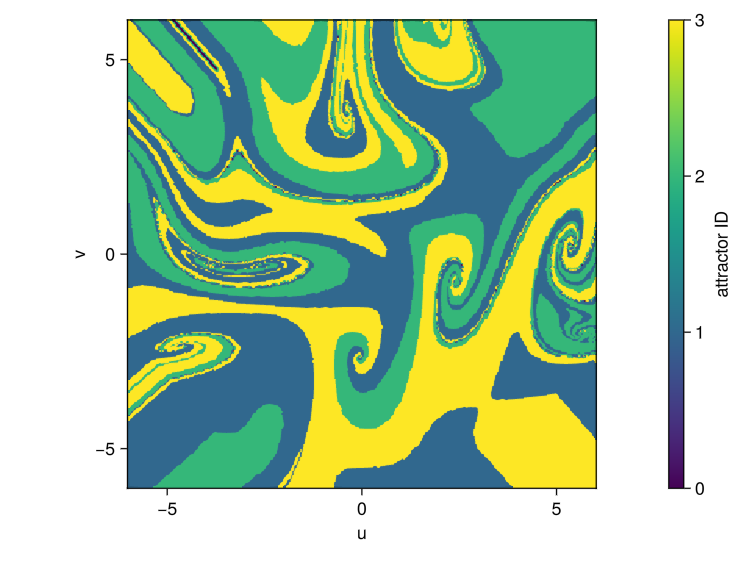}
    \caption{C$_3$ / 2: $8x + 814y + 7z + 1 = 0$, origin $(0,-1/7,0)$, direction vectors $(0,8,-7),(-113,5698,6512)$ (Table 1).}
    \label{fig:basin_C3_2}
\end{subfigure}

\medskip

\begin{subfigure}[t]{0.48\textwidth}
    \centering
    \includegraphics[width=\textwidth]{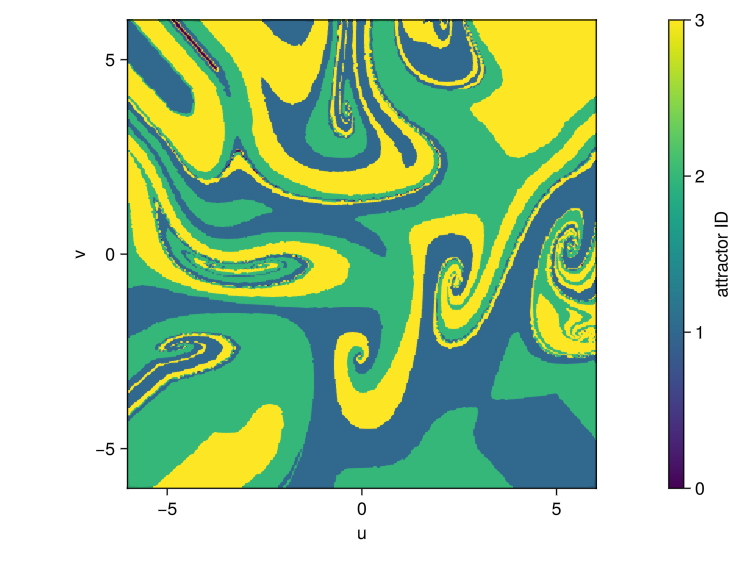}
    \caption{C$_3$ / 3: $814x + 7y + 8z + 1 = 0$, origin $(0,0,-1/7)$, direction vectors $(-7,0,8),(6512,-113,5698)$ (Table 1).}
    \label{fig:basin_C3_3}
\end{subfigure}
\hfill
\begin{subfigure}[t]{0.48\textwidth}
    \centering
    \includegraphics[width=\textwidth]{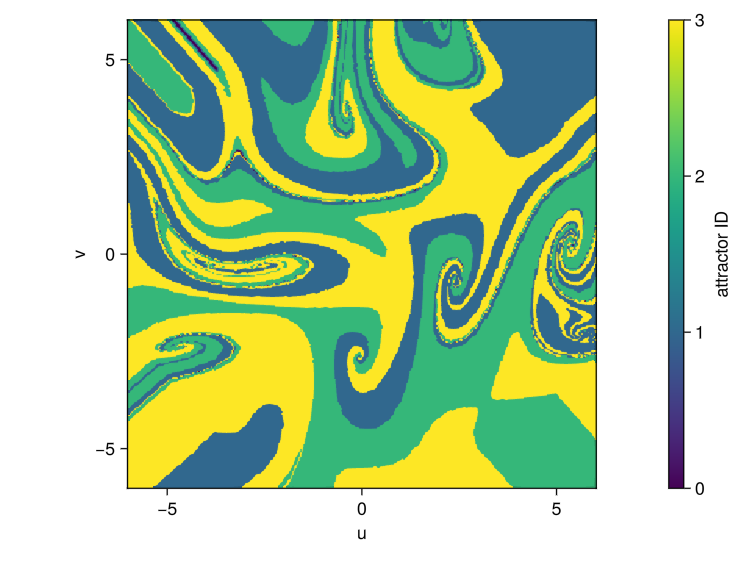}
    \caption{$\mathbb{Z}_2$ / 2: $-814x - 7y - 8z + 1 = 0$, origin $(0,1/7,0)$, direction vectors $(0,-8,7),(113,-5698,-6512)$ (Table 1).}
    \label{fig:basin_Z2_2}
\end{subfigure}

\caption{Basin slices on affine planes under group actions. Each subfigure corresponds to the plane parameters listed in Table 1.}
\label{fig:basin_slices_2perpage}
\end{figure}

\medskip

\begin{table}[htbp]
\centering
\renewcommand{\arraystretch}{1.5} 
\setlength{\tabcolsep}{12pt}       
\caption{Affine plane slices for basin visualization under group actions. Each plane is defined by $a(x-o_x)+b(y-o_y)+c(z-o_z)+d=0$, with direction vectors defining the plane coordinates.}
\begin{tabular}{|c|c|c|c|}
\hline
\textbf{Group/Slice} & \textbf{Plane Equation} & \textbf{Origin} & \textbf{Direction Vectors} \\
\hline
$C_3$ / 1 & $7x + 8y + 814z + 1 = 0$ & $(-\frac{1}{7},0,0)$ & $(8,-7,0),(5698,6512,-113)$ \\
$C_3$ / 2 & $8x + 814y + 7z + 1 = 0$ & $(0,-\frac{1}{7},0)$ & $(0,8,-7),(-113,5698,6512)$ \\
$C_3$ / 3 & $814x + 7y + 8z + 1 = 0$ & $(0,0,-\frac{1}{7})$ & $(-7,0,8),(6512,-113,5698)$ \\
\hline
$\mathbb{Z}_2$ / 1 & $814x + 7y + 8z + 1 = 0$ & $(0,-\frac{1}{7},0)$ & $(0,8,-7),(-113,5698,6512)$ \\
$\mathbb{Z}_2$ / 2 & $814x + 7y + 8z - 1 = 0$ & $(0,\frac{1}{7},0)$ & $(0,-8,7),(113,-5698,-6512)$ \\
\hline
\end{tabular}
\end{table}

\noindent

We consider the Thomas system with parameter $b=0.1665$, 
which exhibits both rotational and central symmetry. 
The attractors are arranged in three-dimensional space by the action of a cyclic group of order three 
($120^\circ$ rotations around the $(1,1,1)$ axis), while each attractor itself also possesses central symmetry. 

To investigate the symmetry properties of the basins of attraction and their boundaries, 
we take an arbitrary two-dimensional plane and establish a local coordinate system on it. 
By applying the same group operations simultaneously to the plane and its coordinate system, 
the following numerical observations can be made:

\begin{itemize}
    \item Under the action of the cyclic group (rotational symmetry), 
    the boundary $\partial B(A)$ remains unchanged, while the basin $B(A)$ is permuted and its shape changes.
    \item Under the action of the central inversion group, not only does the boundary $\partial B(A)$ remain unchanged, 
    but the entire basin $B(A)$ also remains invariant.
\end{itemize}

These numerical experiments clearly demonstrate the hierarchical inclusion
\[
G_A \;\subseteq\; G_{B(A)} \;\subseteq\; G_{\partial B(A)},
\]
namely: the symmetry group of the attractor is necessarily contained in the symmetry group of its basin, 
and the symmetry group of the basin is contained in that of its boundary. 

For clarity, the different effects of cyclic and central symmetries are summarized in Table~\ref{tab:symmetry-comparison}.

\begin{table}[h]
\centering
\renewcommand{\arraystretch}{1.6} 
\setlength{\tabcolsep}{14pt}      
\begin{tabular}{|c|c|c|}
\hline
\textbf{Symmetry group} & \textbf{Action on attractors} & \textbf{Action on basins / boundary} \\
\hline
$C_3$  
& $A_i \mapsto A_{j}$  
& $B(A_i) \mapsto B(A_{j}), \ \partial B(A)$ invariant \\
\hline
$\mathbb{Z}_2$ 
& $A_i \mapsto A_i$ 
& $B(A_i) \mapsto B(A_i), \ \partial B(A)$ invariant \\
\hline
\end{tabular}
\caption{Cyclic vs. Central symmetry actions on attractors, basins, and their boundaries in the Thomas system.}
\label{tab:symmetry-comparison}
\end{table}

\section{conclusion}

In this paper, we developed a systematic framework for analyzing the symmetry 
properties of basins of attraction and their boundaries in equivariant dynamical systems. 
Building on the classical understanding of attractor symmetries, we established the 
fundamental inclusion hierarchy
\[
G_A \subseteq G_{B(A)} \subseteq G_{\partial B(A)},
\]
showing that boundary symmetries can strictly extend beyond those of attractors 
and their basins.

To characterize admissible symmetry groups of basin boundaries, we introduced 
three complementary approaches: \textbf{Route A (Thickening Transfer)}, which transfers 
non-admissibility results from compact attractors to basins; \textbf{Route B (Algebraic 
Constraints)}, which exploits the closedness of boundaries to impose algebraic 
restrictions; and \textbf{Route C (Connectivity and Flow Analysis)}, which leverages 
dynamical connectivity to constrain possible subgroup structures. Together, these 
approaches provide a versatile toolkit for excluding non-admissible subgroups and for 
identifying structural restrictions unique to basin boundaries.

Numerical experiments on the Thomas system confirmed the theoretical predictions. 
In particular, we demonstrated that while cyclic group actions permute basins but 
preserve their common boundary, central inversion leaves both basins and boundaries 
invariant. These results illustrate that basin boundaries may exhibit richer symmetry 
than the attractors they separate, highlighting their role as the most symmetric structures 
in equivariant dynamical systems.

Beyond the theoretical and numerical contributions, our findings suggest broader 
implications for the study of multistability, metastability, and symmetry-induced 
constraints in physical and biological systems. Future work may extend these results 
to infinite-dimensional dynamical systems, stochastic perturbations, and applications 
in pattern formation and neural dynamics, where basin geometry plays a crucial role 
in determining system behavior.

\section*{Acknowledgments}
The author is grateful to Professor Guanrong Chen and Dr. Jingxi Yang for their helpful discussions and constructive suggestions that improved this manuscript.

\bibliographystyle{unsrt}
\bibliography{arxiv}

\begin{thebibliography}{1}

\bibitem{li2023symmetric}
Chunbiao Li, Zhinan Li, Yicheng Jiang, Tengfei Lei, and Xiong Wang.
\newblock Symmetric strange attractors: a review of symmetry and conditional symmetry.
\newblock {\em Symmetry}, 15(8):1564, 2023.

\bibitem{golubitsky2012singularities}
Martin Golubitsky, Ian Stewart, and David~G Schaeffer.
\newblock {\em Singularities and groups in bifurcation theory: volume II}, volume~69.
\newblock Springer Science \& Business Media, 2012.

\bibitem{golubitsky2002symmetry}
Martin Golubitsky and Ian Stewart.
\newblock The symmetry perspective, volume 200 of progress in mathematics, 2002.

\bibitem{melbourne1993structure}
Ian Melbourne, Michael Dellnitz, and Martin Golubitsky.
\newblock The structure of symmetric attractors.
\newblock {\em Archive for rational mechanics and analysis}, 123(1):75--98, 1993.

\bibitem{ashwin1994symmetry}
Peter Ashwin and Ian Melbourne.
\newblock Symmetry groups of attractors.
\newblock {\em Archive for rational mechanics and analysis}, 126(1):59--78, 1994.

\bibitem{thomas1999}
Ren{\'e} Thomas.
\newblock Deterministic chaos seen in terms of feedback circuits: Analysis, synthesis," labyrinth chaos".
\newblock {\em International Journal of Bifurcation and Chaos}, 9(10):1889--1905, 1999.

\end{thebibliography}

\end{document}